\newtheorem{theorem}{Theorem}
\newtheorem{definition}{Definition}
\newtheorem{proposition}{Proposition}
\newtheorem{observation}{Observation}
\newtheorem{Lemma}{Lemma}
\newtheorem{remark}{Remark}
\pgfplotsset{compat=newest}
\newcommand{\Tr}{\operatorname{Tr}}
\newenvironment{manualtheorem}[1]{%
  \IfBlankTF{#1}
    {}
    {}%
  \manualtheoreminner
}{\endmanualtheoreminner}
\begin{document}

\title{No-Go Theorem for Generic Simulation of Qubit Channels with Finite Classical Resources}

\author{Sahil Gopalkrishna Naik}
%\email{sahiln1112@gmail.com}
\affiliation{Department of Physics of Complex Systems, S. N. Bose National Center for Basic Sciences, Block JD, Sector III, Salt Lake, Kolkata 700106, India.}

\author{Mani Zartab}
%\email{mani.zartab@uab.cat}
\affiliation{F\'{\i}sica Te\`{o}rica: Informaci\'{o} i Fen\`{o}mens Qu\`{a}ntics, Departament de F\'{\i}sica, Universitat Aut\`{o}noma de Barcelona, 08193 Bellaterra (Barcelona), Spain.}

\author{Nicolas Gisin}
\affiliation{Group of Applied Physics, University of Geneva, 1211 Geneva 4, Switzerland.}
\affiliation{Constructor University, Bremen, Germany.}
%\affiliation{Constructor Institute of Technology, Geneva, Switzerland.}

\author{Manik Banik}
\affiliation{Department of Physics of Complex Systems, S. N. Bose National Center for Basic Sciences, Block JD, Sector III, Salt Lake, Kolkata 700106, India.}

\begin{abstract}
The mathematical framework of quantum theory, though fundamentally distinct from classical physics, raises the question of whether quantum processes can be efficiently simulated using classical resources. For instance, a sender (Alice) possessing the classical description of a qubit state can simulate the action of a qubit channel through finite classical communication with a receiver (Bob), enabling Bob to reproduce measurement statistics for any observable on the state. In this work, we contend that a more general simulation requires reproducing statistics of joint measurements, potentially involving entangled effects, on Alice's system and an additional system held by Bob—even when Bob’s system state is unknown or entangled with a larger system. Within this broad framework, we prove that no finite amount of classical messaging—regardless of how many rounds are used or how large each message can be—can reproduce a perfect qubit channel, highlighting an inescapable barrier in quantum channel simulation with classical resources. We also establish that entangled effects crucially underlies this no-go result. However, for noisy qubit channels, such as those with depolarizing noise, we demonstrate that general simulation is achievable with finite communication. Notably, the required communication increases as the noise decreases, revealing an intricate relationship between the noise in the channel and the resources necessary for its classical simulation.
\end{abstract}

\maketitle
\section{Introduction}
Classical physics, rooted in intuitive and objective principles, offers deterministic descriptions of the physical phenomena we encounter in daily life (though see \cite{Gisin2019}). In stark contrast, the quantum realm defies classical reasoning, exhibiting phenomena that challenge conventional intuition. Quantum mechanics—formulated within the Hilbert space framework—delivers an extraordinarily precise mathematical account of these phenomena, but it refrains from offering clear physical intuition about their nature \cite{Dirac1930, vonNeumann2018, Peres2002}. Nonetheless, the advent of quantum information theory has highlighted practical advantages of quantum resources over their classical counterparts in tasks such as computation, communication, and cryptography \cite{Deutsch1992, Shor94, Grover1996, Bennett1992, Bennett1993, Buhrman2010, Bennett2014, Ekert1991, Gisin2002}. In this context, simulating quantum processes with classical resources promises a compelling research avenue \cite{Feynman1982,Bremner2010,Rahimi2016}. Such investigations serve a dual purpose: quantifying the computational and communicational power of quantum resources while deepening our understanding of the unique features that distinguish quantum phenomena from classical intuitions.

A hallmark of quantum mechanics, underscored by Bell’s theorem \cite{Bell1964} and corroborated through decades of experiments \cite{Freedman1972, Aspect1981, Aspect1982(1), Aspect1982(2), Zukowski1993, Tittel1998,Weihs1998}, is the emergence of nonlocal correlations among the outcomes of local measurements performed on entangled states. These correlations defy any {\it local realistic} explanation \cite{Bell1966, Mermin1993,Aspect2002,Brunner2014,Gisin2023}. Furthermore, entangled states shared among distant parties cannot be prepared through local quantum operations and classical communication (LOCC) \cite{Horodecki2009}. Despite their inherent nonlocality, the local measurement statistics of entangled states can often be faithfully reproduced through finite classical communication between distant parties holding parts of the composite system \cite{Brassard1999, Steiner2000, Massar2001, Regev2010, Branciard2011, Kar2011, Branciard2012, Banik2012, Roy2014,Brassard2019}. This paradigm extends naturally to quantum channel simulation, where a receiver (Bob) aims to replicate the statistics of arbitrary measurements on a quantum state unknown to him but fully known to a sender (Alice), who aids Bob while minimizing the classical communication required \cite{Cerf2000, Toner2003, Methot2004, Montina2012(1), Renner2023}. In particular, the result by Toner and Bacon demonstrated that the statistics of any projective measurement, also called the von Neumann measurement, on a qubit state can be simulated using just two classical bits of communication \cite{Toner2003}. Subsequent work extended this result to more general settings, including positive operator-valued measures (POVMs) \cite{Kraus1983}, further illustrating the feasibility of classical simulation with finite communication \cite{Renner2023}.

In this work, we argue that classical simulation of quantum channels must extend beyond reproducing local measurement statistics to encompass more general scenarios. Specifically, simulations should replicate the statistics of joint measurements—including those in entangled bases—on Alice’s system and an ancillary system held by Bob, even when Bob’s system is unknown or entangled with an external reference. Focusing on such entangled measurements, we prove that a perfect qubit channel cannot be simulated under this paradigm using any finite amount of classical communication. While the prior studies on quantum channel simulation \cite{Montina2012(1), Toner2003, Cerf2000, Methot2004, Renner2023} consider one-way communication protocols from Alice to Bob, more general simulation strategies allow for interactive protocols involving multiple rounds of classical communication. In such settings, Bob may send information to Alice, who responds based on her input and Bob’s message, and so on—provided the process concludes after a finite number of rounds. Notably, our no-go theorem holds even in this most general multi-round framework, establishing a fundamental limit on classical simulations of quantum channels.

On the other hand, when the joint measurements are restricted to product or separable effects, we show that such simulations are always possible with finite classical communication from Alice to Bob, regardless of the quantum state held by Bob—so long as both systems are finite-dimensional. We also study noisy qubit channels and demonstrate that even under arbitrary joint measurements—including entangled ones—classical simulation becomes feasible for qubit depolarizing channels. Importantly, the amount of classical communication required grows as the noise decreases, revealing a delicate trade-off between the coherence of the quantum channel and the classical resources needed for its simulation.

\section{Classical simulation of quantum channels} 
A quantum channel is a physical device—such as an optical fiber—that transmits quantum states from a sender to a receiver, even when the state is unknown or part of a larger entangled system. Mathematically, a quantum channel is represented by a completely positive trace-preserving (CPTP) map \cite{Kraus1983,Self1}. The quantum teleportation protocol demonstrates that such a channel can be perfectly simulated using classical communication, provided the parties share prior entanglement \cite{Bennett1993}. In contrast, when the quantum state is known to the sender—as in Remote State Preparation (RSP) \cite{Lo2000, Pati2000, Bennett2001}—the simulation may be achieved using classical communication supplemented by shared randomness.

Classical simulation of a quantum channel has been studied in \cite{Montina2012(1), Toner2003, Cerf2000, Methot2004, Renner2023}. Formally, Alice, given the classical description of a quantum state $\ket{\psi} \in \mathbb{C}^d$, samples a message $m$ from a distribution $p(m|x,\psi)$ that depends on a shared random variable $x\sim p(x)$. She sends $m$ to Bob, who then aims to reproduce the statistics of an arbitrary POVM $\mathrm{M} = \{\mathrm{E}^k\}$, using a conditional distribution $p(k|m,x,\mathrm{M})$. The simulation must succeed even when Alice is unaware of Bob’s measurement choice. For a qubit, the state projector $\mathrm{P}_{\hat{\psi}} = \frac{1}{2}(\mathbf{I}_2 + \hat{\psi} \cdot \sigma)$ is fully determined by the Bloch vector $\hat{\psi} \in \mathbb{R}^3$. A simulation is successful if, for all $\psi$ and $\mathrm{M}$, $\sum_m \int dx \, p(k|m,x,\mathrm{M}) \, p(m|x,\psi) \, p(x) = \Tr[\Lambda(\mathrm{P}_\psi)\mathrm{E}^k]$, where $\Lambda: \mathcal{D}(\mathbb{C}^d) \to \mathcal{D}(\mathbb{C}^d)$ denotes the quantum channel and $\mathcal{D}(\cdot)$ is the set of density matrices. Importantly, the existing channel simulation protocols consider only one-round protocols with communication from Alice to Bob; and communication cost of such a protocol is defined as the minimum communication required for exact simulation \cite{Winter2002, Harsha2010}. In general, however, one may allow multi-round interactive protocols involving back-and-forth classical communication, which we explore in subsequent sections.

\section{Channel simulation, generic setup} 
The simulation of a quantum channel must faithfully reproduce all possible measurement outcome statistics as dictated by the Born rule. This encompasses broader scenario of channel simulation that involves reproducing the statistics of a joint measurement \(\mathrm{M}_{AB} \equiv \{\mathrm{E}^k_{AB}\}\), applied to system \(A\), known to Alice, and system \(B\), provided to Bob (see Fig. \ref{fig1}). Importantly, the state of system \(B\) may be unknown to both parties and could even form part of a larger entangled system, such as \(BC\). This generalized context introduces significant challenges for classical simulation protocols. As a relevant aside, here we recall the semi-quantum Bell scenario introduced in \cite{Buscemi2012}, which replaces the classical inputs in standard Bell scenario by quantum states that can be nonorthogonal in general (see also \cite{Branciard2013, Banik2013, Chaturvedi2015, Lobo2022}). The quantumness of preparation also find application in verifiability of blind quantum computation \cite{Childs2005,Fitzsimons2017,Ma2022}. While the standard channel simulation scenario, investigated in \cite{Montina2012(1), Toner2003, Cerf2000, Methot2004, Renner2023}, has close analogy with  quantum random access code (QRAC) with the sender and receiver provided with classical inputs (or queries) \cite{Wiesner1983,Ambainis2002}, in our generic scenario receiver’s classical queries are replaced with quantum queries.
\begin{figure}[t!]
\centering
\includegraphics[scale=0.42]{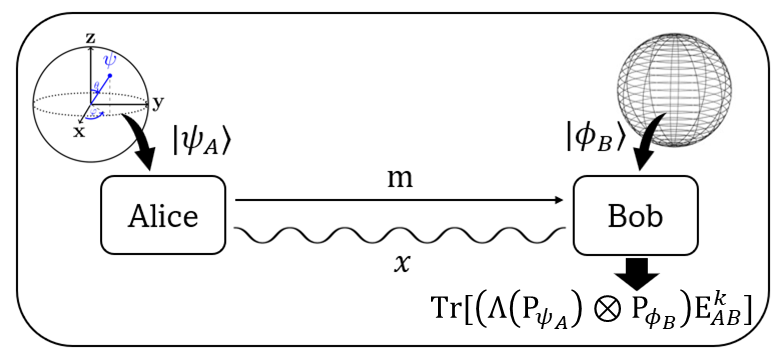}
\caption{The generic simulation of a qubit channel \((\Lambda)\): Alice is given the classical description of a state \(\psi_A \in \mathbb{C}^2_A\), while Bob holds an unknown state \(\phi_B \in \mathbb{C}^2_B\). Their goal is to reproduce the statistics of a joint measurement $\mathrm{M}_{AB}\equiv\{\mathrm{E}^k_{AB}\}$ at Bob's location, which as per Born rule reads as \(\Tr[(\Lambda(\mathrm{P}_{\psi_A}) \otimes \mathrm{P}_{\phi_B}) \mathrm{E}^k_{AB}]\). More generally, the system \(B\) can have arbitrarily large dimension and may also form part of a larger joint system \(BC\).}\vspace{-.2cm}
\label{fig1}
\end{figure}

\section{Results}
Within the generalized channel simulation scenario, we first establish that in one-round simulation the minimum communication cost \(\mathrm{C}^{G}_{\min}\) diverges to infinity, even for the case of qubit channels. 
\begin{theorem}\label{theo1}
The generic simulation of the perfect qubit channel is impossible through one-round protocol by using classical resources alone, even if Alice is permitted to send an arbitrary but finite amount of classical information to Bob.
\end{theorem}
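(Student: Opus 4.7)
My plan is to exhibit a rank-one rigidity phenomenon in the operator identity that any one-round classical protocol must satisfy, and then contradict it by counting. The instance I will pick is: Bob's system $B$ is a qubit in an unknown state and the joint measurement is the Bell basis $M_{AB}=\{|\Phi^k\rangle\!\langle\Phi^k|\}_{k=0}^{3}$ on $AB$, with $|\Phi^k\rangle=(I\otimes\sigma_k)|\Phi^0\rangle$.

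\emph{Step 1 (reduction to an operator equality).} A short partial-trace calculation using $(A\otimes I)|\Phi^0\rangle=(I\otimes A^{\top})|\Phi^0\rangle$ gives $\Tr_A[(\psi_A\otimes I)|\Phi^k\rangle\!\langle\Phi^k|] = \tfrac12\sigma_k\psi_A^{\top}\sigma_k =: R^k_{\psi_A}$, which is a rank-one positive operator on $B$ for every pure $\psi_A$. In a one-round protocol, Alice samples $m$ from $p(m|x,\psi_A)$ using shared randomness $x\sim p(x)$, and Bob, who does not know Bob's own state, implements a POVM $\{N^k_{m,x}\}$ on $B$; its elements depend on $(m,x,M_{AB})$ but not on $\rho_B$. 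Matching the Born-rule statistic $\Tr[(\psi_A\otimes\rho_B)|\Phi^k\rangle\!\langle\Phi^k|]=\Tr[\rho_B R^k_{\psi_A}]$ for \emph{every} density $\rho_B$ and using linearity yields the operator identity
\begin{equation*}
\sum_{m}\int p(m|x,\psi_A)\,p(x)\,N^k_{m,x}\,dx \;=\; \tfrac12\,\sigma_k\psi_A^{\top}\sigma_k \qquad\forall\,\psi_A,\,k.
\end{equation*}

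\emph{Step 2 (rank-one rigidity).} The right-hand side is rank one while the left-hand side is a positive-weight combination of positive operators. The elementary fact that a positive combination of positive operators can equal a rank-one positive operator only if every contributor with nonzero weight already lies in that rank-one ray (apply it to any vector in the kernel of the target) implies that, for $p$-almost every $x$ and every $m$ with $p(m|x,\psi_A)>0$, one has $N^k_{m,x}\propto\sigma_k\psi_A^{\top}\sigma_k$. Since $N^k_{m,x}$ does not depend on $\psi_A$, each classical record $(m,x)$ is compatible with \emph{at most one} pure input $\psi^{\star}(m,x)$: if both $p(m|x,\psi_A)>0$ and $p(m|x,\psi_A')>0$ then $\sigma_k\psi_A^{\top}\sigma_k\propto\sigma_k\psi_A'^{\top}\sigma_k$, forcing $\psi_A=\psi_A'$.

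\emph{Step 3 (cardinality contradiction).} Suppose Alice's alphabet has size $|\mathcal M|\le M<\infty$. Fix any $K>M$ distinct pure qubit states $\psi^{(1)},\dots,\psi^{(K)}$ and let $S_i=\{(m,x):p(m|x,\psi^{(i)})>0\}$. By Step~2 the $S_i$'s are pairwise disjoint, while by normalization $\sum_m p(m|x,\psi^{(i)})=1$ the $x$-section of every $S_i$ is non-empty for $p$-a.e.\ $x$. Hence for $p$-a.e.\ $x$ we would need $K$ pairwise disjoint non-empty subsets of the $M$-element alphabet, which is impossible since $K>M$. Thus no finite classical message, however large, can implement the generic simulation of the perfect qubit channel.

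\emph{Anticipated obstacle.} The only real technical delicacy is the measure-theoretic handling of the continuous shared randomness: the rank-one rigidity must be applied pointwise in $x$ (on a set of full $p(x)$-measure) rather than only after averaging, so that the counting argument in Step~3 is rigorous. This is a standard disintegration step—the integrated operator equality forces, for a.e.\ $x$, the conditional ensemble $\{p(m|x,\psi_A),N^k_{m,x}\}_m$ to lie on the prescribed rank-one ray—after which Steps~2 and 3 proceed fibre by fibre.
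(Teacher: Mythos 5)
Your proposal is correct and follows the same overall strategy as the paper's proof: reduce the simulation condition to an operator identity, invoke a rank-one rigidity argument, and derive disjointness constraints that clash with a finite message alphabet. But the details are genuinely different at two places, so a comparison is worthwhile.

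First, the choice of measurement. The paper uses only the two-outcome singlet measurement $\{\mathrm{P}_{\psi^-},\,\mathbf{I}_4-\mathrm{P}_{\psi^-}\}$, tracks the single effective operator $\mathrm{F}_\psi$, and pins it down to $\tfrac12\mathrm{P}_{\hat\psi^\perp}$ by testing against $\phi=\psi$ and $\phi=\psi^\perp$. You instead take the full Bell basis and obtain the cleaner operator identity $\sum_m\int p(m|x,\psi_A)p(x)N^k_{m,x}\,dx=\tfrac12\sigma_k\psi_A^\top\sigma_k$ directly from the partial trace, matching against \emph{all} $\rho_B$ at once. Your route to the operator equality is somewhat slicker; the paper's is more economical, needing only one rank-one effect rather than four.

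Second, the final contradiction. The paper takes traces, bounds $\tfrac12\le M\max_m\int_{\Lambda^m_\psi}p(x)\,dx$, and then applies a pigeonhole argument on the (uncountably many) disjoint sets $\Lambda^{m_0}_\psi$, each of measure at least $1/(2M)$, against $\int p(x)\,dx=1$. You choose $K>M$ states, disintegrate to almost every fibre $x$, and observe that $K$ pairwise disjoint non-empty $x$-sections of the $S_i$ cannot fit in an $M$-element alphabet. Both work; your cardinality argument has the virtue of needing no quantitative estimate on the weights, while the paper's version integrates the measure-theoretic subtlety into a single bound rather than needing a pointwise disintegration. It is a matter of taste which is cleaner — the disintegration step you flag as the ``anticipated obstacle'' is indeed the crux, and the paper's integrated-measure bound is one way to avoid doing it explicitly.

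One small unaddressed point in your Step 2: the rigidity conclusion $N^k_{m,x}\propto\sigma_k\psi_A^\top\sigma_k$ includes the trivial case $N^k_{m,x}=0$, in which $k$ imposes no constraint on $\psi_A$. To conclude that a record $(m,x)$ pins down $\psi_A$ you need at least one $k$ with $N^k_{m,x}\neq 0$ on the relevant fibre; this follows from the POVM normalization $\sum_k N^k_{m,x}=\mathbf{I}_2$ (indeed, since each $N^k_{m,x}$ is rank at most one, at least two indices are nonzero). Worth a sentence. The paper deals with its analogue of this issue by building the condition $e^{m,x}>0$ directly into the definition of $\Lambda^m_\psi$.
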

\noindent We defer the proof to the Appendix \ref{apptheo1}. Crucially, any simulation that uses only a finite amount of classical communication—whether one-way or interactive—remains inherently classical. Extending our no-go theorem to general multi-round protocols therefore sharpens our understanding of the classical limits on emulating quantum dynamics. It is worth noting that multi-round protocols involving bidirectional communication are shown to advantageous both in classical and quantum realms. For instance, communication from receiver to sender (called the feedback assistance) can enhence zero-error capacity of noisy classical channels \cite{Shannon1956}, where   entanglement purification protocol under two-way protocol can surpass the one-way optimal bound \cite{Bennett1996}. It is also known that characterizing the multi-round local quantum operation and classical communication protocols is a hard problem \cite{Chitambar2014}. Given these subtleties, generalizing Theorem \ref{theo1} beyond single-round communication is nontrivial. Nonetheless, We now show that even the most powerful finite, multi‑round classical protocol cannot perfectly simulate a qubit channel. Specifically, we allow arbitrarily large but finite number of back‑and‑forth rounds between Alice and Bob, each carrying an arbitrarily large but finite amount of classical data, and refer to such schemes as finite back‑and‑forth classical protocols.
\begin{proposition}\label{prop1}
Any simulation protocol involving finite back-and-forth classical communication can always be implemented with finite classical communication from Alice to Bob, only.   
\end{proposition}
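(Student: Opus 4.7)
My plan is a protocol compression argument: because the total back-and-forth communication is finite, Alice's entire decision tree can be precomputed and shipped to Bob in a single message. I would start by fixing any finite interactive protocol $\Pi$ with, say, at most $r$ rounds and per-round message alphabet of size at most $|\mathcal{B}|$, so that all of Bob's messages live in a fixed finite set. The crucial structural observation is that Alice's $i$-th message $a_i$ is, by the very definition of a classical protocol, a (possibly randomized) function only of her input $\psi_A$, the shared randomness $x$, and Bob's prior messages $(b_1,\dots,b_{i-1})$; it cannot depend on Bob's quantum system $\phi_B$ or on the measurement choice $\mathrm{M}_{AB}$ except through those messages.

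I would then construct the one-way protocol $\Pi'$ as follows. Using her input together with the shared randomness, Alice enumerates every possible history $(b_1,\dots,b_{i-1})\in\mathcal{B}^{i-1}$ for each round $i\le r$, and for every such branch she samples $a_i(b_1,\dots,b_{i-1};\psi_A,x)$ from the exact conditional distribution prescribed by $\Pi$. She then transmits this entire table in a single one-way message to Bob; since there are at most $\sum_{i=1}^{r}|\mathcal{B}|^{i-1}$ nodes, each of bounded length, the total classical information sent is finite.

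Bob's side is then routine: using $\phi_B$, $\mathrm{M}_{AB}$, and $x$ he plays his role in $\Pi$ unchanged, generating each $b_i$ by his prescribed strategy, but instead of waiting for Alice's reply he simply reads $a_i$ off the table along the unique branch he has just traced out, and produces the same final output $k$ as in $\Pi$. Because Alice's table entries are sampled from the correct conditionals and Bob consults exactly the branch he would have taken in $\Pi$, the joint law of the full transcript and of Bob's output is identical in $\Pi'$ and $\Pi$, so $\Pi'$ reproduces the same target statistics with one-way, finite communication.

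The main obstacle I anticipate is the correlation structure of Alice's randomness across different branches, since in $\Pi$ she would only ever draw one branch. I would resolve this by absorbing any private randomness of Alice into an enlarged shared random variable---cost-free, since randomness does not count as communication---so that each $a_i$ becomes a deterministic function of $(\psi_A,x)$ along every branch and, on the branch Bob actually selects, reproduces the exact message Alice would have sent in $\Pi$. Once this identification is in place, the equivalence of the two protocols follows immediately, completing the proof.
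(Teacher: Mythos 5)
Your proposal is correct and follows essentially the same route as the paper: precompute Alice's responses for every possible history of Bob's messages and ship the whole table in one message, which is exactly what the paper's explicit three-round construction does by having Alice send $m_1$ together with $\eta_1,\dots,\eta_k$ (her reply for each possible value of Bob's message $m_2$). The one cosmetic difference is that you handle the cross-branch randomness by derandomizing via enlarged shared randomness, whereas the paper samples the branches independently and verifies directly that the unused branches marginalize out since each conditional sums to one; both resolutions are valid.
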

\begin{proof}
(Outline) To establish the no-go result for multi-round case, we begin by noting key structural features of such protocols. Any protocol starting with Bob can be reformulated to begin with Alice by inserting a trivial first message from her. Likewise, since the simulation outcome must be produced at Bob’s end, the final round must involve communication from Alice; any final message from Bob is irrelevant. Thus, without loss of generality, any valid protocol consists of an odd number of rounds, beginning and ending with communication from Alice to Bob. With a finite number of rounds and finite communication per round, the protocol can be represented as a finite tree: each path from root to leaf corresponds to a specific sequence of exchanged messages, $(m_1, m_2, \cdots, m_k)$, with branching determined by Alice’s input and the shared randomness.

The central idea is that, given this finite structure, Alice can locally compute—in advance—the exact message sequence that would be followed for any input state and shared variable. She can then compress this sequence into a single classical message and send it to Bob, thereby collapsing the multi-round protocol into an equivalent one-round protocol. In Appendix \ref{appprop1}, we make this construction explicit for a three-round protocol and show that the resulting correlations can be exactly reproduced by a single-round protocol. By iterating this reduction, any finite-round protocol with bounded communication per round can be simulated by a one-round protocol with finite classical communication from Alice to Bob.
\end{proof}
\noindent Unless stated otherwise, all subsequent results hold for finite back-and-forth classical protocols, and invoking Proposition \ref{prop1}, we can restrict to one-round protocols only. As an immediate consequence of Proposition \ref{prop1}, the no-go result of Theorem \ref{theo1} generalizes as follows:
\begin{theorem}\label{theo2}
The generic simulation of the perfect qubit channel is not possible by any finite back-and-forth classical protocol.
\end{theorem}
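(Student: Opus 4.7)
The plan is to derive Theorem \ref{theo2} as a direct corollary of Proposition \ref{prop1} and Theorem \ref{theo1}, via a short contradiction argument.

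Concretely, I would first suppose, for the sake of contradiction, that some finite back-and-forth classical protocol $\Pi$ achieves generic simulation of the perfect qubit channel. By definition $\Pi$ uses only finitely many rounds, each carrying a finite-alphabet classical message, and reproduces the Born-rule statistics $\Tr\!\bigl[(\mathrm{P}_{\psi_A}\otimes\mathrm{P}_{\phi_B})\,\mathrm{E}^k_{AB}\bigr]$ for every input state $\psi_A$ of Alice, every ancillary state $\phi_B$ on Bob's side (possibly entangled with an external reference $C$), and every joint POVM $\mathrm{M}_{AB}=\{\mathrm{E}^k_{AB}\}$, including ones with entangled effects.

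Next, I would invoke Proposition \ref{prop1} on $\Pi$. Its conclusion yields an equivalent one-round protocol $\Pi'$ in which Alice sends Bob a single classical message drawn from a finite alphabet, and which reproduces the very same joint input-output statistics as $\Pi$. Because $\Pi$ was assumed to implement the perfect qubit channel in the generic sense, so must $\Pi'$. This, however, contradicts Theorem \ref{theo1}, which forbids any finite, one-round, one-way classical simulation of the perfect qubit channel in the generic setup. Hence no such $\Pi$ can exist, which is exactly the content of Theorem \ref{theo2}.

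The deduction itself is bookkeeping once Proposition \ref{prop1} and Theorem \ref{theo1} are in hand, so the main obstacle is not in this argument but upstream: one must be confident that the tree-compression in Proposition \ref{prop1} genuinely preserves the full joint distribution under the broadest admissible choices of Bob's ancillary state and of $\mathrm{M}_{AB}$, since Theorem \ref{theo2} is to be applied to the same generic class of inputs as Theorem \ref{theo1}. Because Proposition \ref{prop1} operates on the classical message tree alone and is agnostic to what Bob does with the compressed message—in particular, Bob may still perform arbitrary entangled measurements on $AB$ and use ancillas purified by $C$—this consistency check passes, and Theorem \ref{theo2} follows immediately.
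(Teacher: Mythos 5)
Your argument is exactly the paper's: Theorem~\ref{theo2} is obtained by combining Proposition~\ref{prop1} (collapsing any finite back-and-forth protocol to an equivalent one-round Alice-to-Bob protocol) with the no-go result of Theorem~\ref{theo1} for one-round protocols. The contradiction bookkeeping you write out matches the paper's (implicit) derivation, so the proposal is correct and takes the same route.
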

\noindent In communication complexity, one quantifies the quantum advantage by the extra classical communication—typically supplemented by preshared randomness—needed to reproduce quantum statistics. Prior results \cite{Montina2012(1), Toner2003, Cerf2000, Methot2004, Renner2023} show that, if Bob receives no quantum state, a finite amount of classical communication can successfully simulate a quantum channel. At first glance, this suggests that quantum–classical gaps might be closed with bounded classical resources. Our Theorem \ref{theo2}, however, demonstrates that even in the simplest nontrivial setting—two‐dimensional Hilbert space ($d=2$)—no finite back‐and‐forth classical protocol can emulate a perfect qubit channel.

\noindent A natural question is whether the use of entangled basis measurements is essential to establish the no-go result in Theorem \ref{theo1}. Specifically, if the joint measurement consists solely of product or separable effects, can its statistics be simulated with finite communication? To address this question we start by observing that the statistics of computational basis measurements, \(\mathrm{M}_{comp}\equiv \{\mathrm{P}_{\hat{z}}\otimes\mathrm{P}_{\hat{z}},\mathrm{P}_{\hat{z}}\otimes\mathrm{P}_{\hat{z}^\perp},\mathrm{P}_{\hat{z}^\perp}\otimes\mathrm{P}_{\hat{z}},\mathrm{P}_{\hat{z}^\perp}\otimes\mathrm{P}_{\hat{z}^\perp}\}\), can be simulated using only 1 bit of classical communication from Alice to Bob: Alice measures her state in \(\sigma_z\) basis and communicates measurement outcome to Bob, who also measures his unknown state in \(\sigma_z\) basis. A similar approach works for the twisted measurement \(\mathrm{M}^{(B)}_{twist}\equiv \{\mathrm{P}_{\hat{z}}\otimes\mathrm{P}_{\hat{z}}, \mathrm{P}_{\hat{z}}\otimes\mathrm{P}_{\hat{z}^\perp}, \mathrm{P}_{\hat{z}^\perp}\otimes\mathrm{P}_{\hat{x}}, \mathrm{P}_{\hat{z}^\perp}\otimes\mathrm{P}_{\hat{x}^\perp}\}\):  Alice measures her state in \(\sigma_z\) basis and communicates the result to Bob, who performs either \(\sigma_z\) or \(\sigma_x\) measurement on his qubit, depending on Alice's communication. The simulation becomes slightly more intricate when the twist is on Alice’s side, namely for the measurement \(\mathrm{M}^{(A)}_{twist}\equiv \{\mathrm{P}_{\hat{z}}\otimes\mathrm{P}_{\hat{z}}, \mathrm{P}_{\hat{z}^\perp}\otimes\mathrm{P}_{\hat{z}}, \mathrm{P}_{\hat{x}}\otimes\mathrm{P}_{\hat{z}^\perp}, \mathrm{P}_{\hat{x}^\perp}\otimes\mathrm{P}_{\hat{z}^\perp}\}\). In this case, Alice performs measurements in both the \(\sigma_z\) and \(\sigma_x\) bases on her state and sends the outcomes to Bob using two separate 1-bit classical channels. This is possible as the state is known to her, and thus she can make copies of it. An alternate protocol is possible using 1-bit of communication from Bob to Alice, followed by 1-bit from Alice to Bob: Bob performs \(\sigma_z\) measurement and communicates the outcome to Alice, who accordingly performs either \(\sigma_z\) or \(\sigma_x\) and communicates back her outcome. Notably,both protocols involve 2-bit of communication. In this regard the following observation is noteworthy.   
\begin{observation}\label{obs1}
Outcome statistics of \(\mathrm{M}^{(A)}_{\text{twist}}\) on Alice's known qubit and Bob's unknown qubit cannot be reproduce at Bob's laboratory with 1 bit of communication. 
\end{observation}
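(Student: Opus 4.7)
The plan is to analyze the most general one-round, one-bit protocol with shared randomness and derive a contradiction from the Born-rule constraints. Suppose Alice transmits $m\in\{0,1\}$ with probability $p(m|x,\psi)$ using shared randomness $x\sim p(x)$, while Bob implements a four-outcome POVM $\{\mathrm{N}^{m,x,k}\}_{k=0}^{3}$ on his qubit $\phi_B$. Since the four target effects are products, tracing out $A$ against $\mathrm{P}_\psi$ yields Bob-side target effects $\mathrm{R}^{0}_\psi=p_z\mathrm{P}_{\hat{z}}$, $\mathrm{R}^{1}_\psi=(1-p_z)\mathrm{P}_{\hat{z}}$, $\mathrm{R}^{2}_\psi=p_x\mathrm{P}_{\hat{z}^\perp}$, $\mathrm{R}^{3}_\psi=(1-p_x)\mathrm{P}_{\hat{z}^\perp}$, where $p_z,p_x$ denote the $\sigma_z,\sigma_x$ outcome probabilities on $\psi$. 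Reproduction demands $\sum_m\int dx\,p(m|x,\psi)p(x)\,\mathrm{N}^{m,x,k}=\mathrm{R}^k_\psi$ for all $\psi$. I would first exploit a rank-one reduction: because each $\mathrm{R}^k_\psi$ is proportional to a rank-one projector, and a convex combination of PSD matrices equal to a rank-one matrix forces every contributing term to lie in that one-dimensional range, on the relevant support $\mathrm{N}^{m,x,0},\mathrm{N}^{m,x,1}\propto\mathrm{P}_{\hat{z}}$ and $\mathrm{N}^{m,x,2},\mathrm{N}^{m,x,3}\propto\mathrm{P}_{\hat{z}^\perp}$. Bob's POVM is then equivalent to performing the $\sigma_z$ measurement and classically relabelling the outcome $b\in\{\hat{z},\hat{z}^\perp\}$ into Alice's declared outcome $g_b(m,x)\in\{\pm1\}$; Bob's local randomness can be absorbed into $x$ so that $g_b$ is deterministic.

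Next I would probe with the four extremal inputs $\psi\in\{\ket{\hat{z}},\ket{\hat{z}^\perp},\ket{\hat{x}},\ket{\hat{x}^\perp}\}$. Each saturates $\psi_z=\pm1$ or $\psi_x=\pm1$, which rules out constant $g_{+1}(\cdot,x)$ and constant $g_{-1}(\cdot,x)$ for a.e.\ $x$ in the support, since a constant function contradicts the opposite extremal case. Hence each $g_{b,x}$ is either the identity or the negation map from $\{0,1\}$ to $\{\pm1\}$, partitioning the shared-randomness space into four classes $X_1,X_2,X_3,X_4$ labelled by $(g_{+1,x},g_{-1,x})\in\{\mathrm{id},\mathrm{neg}\}^2$, with probabilities $p_1+p_2+p_3+p_4=1$. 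Setting $\mu(x,\psi):=2p(m=0|x,\psi)-1\in[-1,1]$, a direct computation yields
\begin{align}
\psi_z+\psi_x&=2\!\int_{X_1}\!p(x)\mu\,dx-2\!\int_{X_4}\!p(x)\mu\,dx,\nonumber\\
\psi_z-\psi_x&=2\!\int_{X_2}\!p(x)\mu\,dx-2\!\int_{X_3}\!p(x)\mu\,dx,\nonumber
\end{align}
so $|\psi_z+\psi_x|\le2(p_1+p_4)$ and $|\psi_z-\psi_x|\le2(p_2+p_3)$.

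To close, I would pick $\hat\psi=(1,0,1)/\sqrt2$ (giving $\psi_z+\psi_x=\sqrt2$), forcing $p_1+p_4\ge 1/\sqrt2$, and $\hat\psi=(-1,0,1)/\sqrt2$ (giving $\psi_z-\psi_x=\sqrt2$), forcing $p_2+p_3\ge 1/\sqrt2$. Adding these two bounds gives $\sum_ip_i\ge\sqrt2>1$, contradicting $\sum_ip_i=1$. The main obstacle I anticipate is making the rank-one reduction fully rigorous: one must verify that, as $\psi$ ranges over the Bloch sphere, every $(m,x)$ in the operational support of the protocol is activated by some $\psi$ with $p_z(\psi)>0$ (respectively $p_x(\psi)>0$), so that the rank-one structure of $\mathrm{R}^k_\psi$ genuinely pins down the direction of $\mathrm{N}^{m,x,k}$. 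The residual cases --- $(m,x)$ activated only by $\psi=\ket{\hat{z}^\perp}$ or $\ket{\hat{x}^\perp}$ --- contribute zero to every target effect and can be redefined arbitrarily; once this structural collapse is secured, the remainder reduces to the clean triangle-inequality count above.
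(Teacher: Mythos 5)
Your proof is correct and, importantly, takes a genuinely different route from the paper. The paper's proof is a reduction: assuming a one-bit simulation of $\mathrm{M}^{(A)}_{\text{twist}}$ exists, it is turned into a one-bit classical strategy for the $2\!\mapsto\!1$ random access code that succeeds with probability $\tfrac{1}{2}(1+\tfrac{1}{\sqrt{2}})$, contradicting the known optimal one-bit RAC value $\tfrac{1}{2}(1+\tfrac{1}{2})$. Your argument instead analyzes the simulation constraints directly: you use a rank-one reduction (the same device the paper uses in its proof of Theorem~\ref{theo1}, not here) to collapse Bob's strategy to a $\sigma_z$ measurement plus a relabeling $g_{b,x}:\{0,1\}\to\{\pm1\}$, exclude constant relabelings by probing the extremal $\psi$'s, and then derive $p_1+p_4\ge\tfrac{1}{\sqrt{2}}$ and $p_2+p_3\ge\tfrac{1}{\sqrt{2}}$, contradicting normalization. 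In substance your ``triangle-inequality count'' re-derives the classical RAC bound from scratch, so the two proofs carry the same mathematical content but package it differently: the paper's version is shorter and makes the conceptual link to RACs explicit, whereas yours is self-contained and makes visible where the quantity $\sqrt{2}$ comes from, which could be useful if one wanted to adapt the bound to other product POVMs. Two small points to tighten: (i) your closing remark that the residual $(m,x)$ pairs ``contribute zero to every target effect'' is not quite right --- e.g.\ for $\psi=\ket{\hat z^\perp}$ the effects $\mathrm{R}^1_\psi$, $\mathrm{R}^2_\psi$, $\mathrm{R}^3_\psi$ are nonzero --- but in each case the relevant rank-one constraint still pins $\mathrm{N}^{m,x,k}$ to the correct ray (or to zero), so the conclusion of the reduction holds; and (ii) the Observation does not fix the direction of the bit, so you should add the paper's (trivial) remark that a single bit from Bob to Alice is useless, since Bob's output must depend on $\psi$ and he receives nothing in that direction.
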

\noindent This follows from the well-established fact that, in the random access code (RAC) task, a qubit outperforms a classical bit \cite{Wiesner1983, Ambainis2002} (see Appendix \ref{appobs1} for details). Considering the most general product von Neumann measurements, we establish the following result (proof discussed in Appendix \ref{apptheo3} along with an explicit example in \ref{apptheo3ex}).
\begin{theorem}\label{theo3}
Statistics of any product von Neumann measurement on a qubit, known to Alice, and an unknown qudit held by Bob can always be simulated at Bob's end by finite classical communication from Alice to Bob.  
\end{theorem}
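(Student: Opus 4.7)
The central idea is to exploit the product structure of the measurement to cleanly separate Alice's task from Bob's. Write the joint measurement as $\mathrm{M}_{AB}=\{\mathrm{P}_{\hat{\alpha}_i}\otimes\mathrm{Q}^B_j\}$, with $\{\mathrm{P}_{\hat{\alpha}_i}\}_{i=0,1}$ a rank-one projective measurement on the qubit $A$ and $\{\mathrm{Q}^B_j\}_{j=1}^d$ a rank-one projective measurement on the qudit $B$. For the perfect channel and any joint input $\mathrm{P}_{\psi_A}\otimes\rho_{BC}$, with $C$ an arbitrary reference system possibly entangled with $B$, the Born probability factorizes, namely $\Tr[(\mathrm{P}_{\hat{\alpha}_i}\otimes\mathrm{Q}^B_j\otimes\mathrm{N}^C_k)(\mathrm{P}_{\psi_A}\otimes\rho_{BC})]=\Tr[\mathrm{P}_{\hat{\alpha}_i}\mathrm{P}_{\psi_A}]\,\Tr[(\mathrm{Q}^B_j\otimes\mathrm{N}^C_k)\rho_{BC}]$, for any further measurement $\{\mathrm{N}^C_k\}$ on $C$. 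Hence, if $i$ is sampled from the correct qubit marginal using only $\psi_A$ and shared randomness, while $j$ is produced by actually measuring $B$, the full joint statistics are reproduced.

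First I would invoke the Toner--Bacon protocol: with two uniformly random unit vectors as shared randomness and a $2$-bit message from Alice to Bob, Bob can sample an outcome that reproduces $\Tr[\mathrm{P}_{\hat{\alpha}_i}\mathrm{P}_{\psi_A}]$ for any projective qubit measurement $\{\mathrm{P}_{\hat{\alpha}_i}\}$, which he is free to choose. The protocol then runs as: (i) Alice computes her 2-bit message from $\psi_A$ and the shared randomness and transmits it; (ii) Bob feeds this message, the shared randomness, and his chosen qubit basis $\{\hat{\alpha}_i\}$ into the Toner--Bacon decoder to output $i$; (iii) in parallel, Bob physically performs $\{\mathrm{Q}^B_j\}$ on his qudit to output $j$. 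Since Alice's message depends only on her known state and the shared randomness---neither of which carries information about $\rho_{BC}$---the variable $i$ is statistically independent of anything on the $BC$ side.

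Combining the two steps, the joint distribution over $(i,j)$, and in fact over any additional outcome $k$ from a later measurement on $C$, factors exactly as the Born rule demands, so every higher-order correlation with $C$ is faithfully inherited from Bob's genuine quantum measurement on $B$. Two classical bits from Alice to Bob therefore suffice, uniformly in the dimension $d$ of Bob's system. The subtle point---rather than a hard technical obstacle---is verifying that swapping Alice's marginal quantum statistics for a Toner--Bacon surrogate does not spuriously correlate $i$ with the $BC$-subsystem; the product form of $\mathrm{M}_{AB}$ is exactly what rules this out, since the quantum prediction itself contains no such correlation. A concrete illustration of the protocol for a particular product basis is deferred to Appendix~\ref{apptheo3ex}.
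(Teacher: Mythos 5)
Your proof handles only a special case of the theorem: you explicitly write the measurement as $\{\mathrm{P}_{\hat{\alpha}_i}\otimes\mathrm{Q}^B_j\}$, i.e. a single fixed two-outcome qubit measurement $\{\mathrm{P}_{\hat{\alpha}_i}\}$ tensored with a single fixed qudit measurement $\{\mathrm{Q}^B_j\}$. But ``product von Neumann measurement'' in the theorem means a measurement in an arbitrary orthonormal product basis of $\mathbb{C}^2\otimes\mathbb{C}^d$, which is strictly more general: the qubit projector attached to a given product vector need not come from one fixed pair. This is exactly the point of the examples $\mathrm{M}^{(A)}_{\text{twist}}$ and $\mathrm{M}^{(B)}_{\text{twist}}$ discussed right before the theorem, where the Alice-side projectors are $\{\mathrm{P}_{\hat{z}},\mathrm{P}_{\hat{z}^\perp},\mathrm{P}_{\hat{x}},\mathrm{P}_{\hat{x}^\perp}\}$ — four distinct rank-one projectors, not a two-outcome measurement — so the measurement is a product basis that is not a tensor product of two local bases. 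Your factorized Born formula, and your step (iii) where Bob ``physically performs $\{\mathrm{Q}^B_j\}$'', have no meaning in that setting. The paper's proof instead invokes the Bennett et al.\ structure theorem for orthonormal product bases of $\mathbb{C}^2\otimes\mathbb{C}^d$: Bob's Hilbert space decomposes into mutually orthogonal sectors $\mathcal{S}_i$, and the Alice-side qubit basis $\{\alpha_i,\alpha_i^\perp\}$ depends on the sector. The paper's protocol then has Bob first coarse-grain into sectors and has Alice send, in parallel, one bit per sector computed on independent copies of her (known, hence copyable) state.

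Your Toner--Bacon idea can be repaired to cover the general case, and is in fact more communication-efficient: have Bob first measure the sector projectors $\{\Pi_{\mathcal{S}_i}\}$ on $B$, learn $i$, and only then feed the shared randomness, Alice's $2$-bit message, and the sector-dependent basis $\{\alpha_i,\alpha_i^\perp\}$ into the Toner--Bacon decoder; finally refine the measurement within $\mathcal{S}_i$ depending on the decoder's bit. This is legitimate because Alice's Toner--Bacon message is, by construction, independent of the decoding basis, so Bob may choose it adaptively. But as written you never identify the OPB structure, never let the decoding basis depend on Bob's side, and so have not proved the stated theorem.
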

\begin{figure}[t!]
\begin{tikzpicture}[scale=1.1]
\draw[line width=.5mm] (0,0) circle[radius=2];
---------------
\draw[fill=cyan,opacity=0.5] (0,0) -- ({deg(pi/2)}:2) arc[start angle={deg(pi/2)}, end angle={deg(pi/9)}, radius=2] -- cycle;
\draw[line width=.5mm, blue,->] (0,0) -- ({2*cos(deg(pi/9))},{2*sin(deg(pi/9))});
\node at ({2.3*cos(deg(pi/9))},{2.3*sin(deg(pi/9))}) {\large{$|\beta\rangle$}};
---------------
\draw[fill=cyan,opacity=0.5] (0,0) -- ({deg(pi/2)}:2) arc[start angle={deg(pi/2)}, end angle={deg(8*pi/9)}, radius=2] -- cycle;
\draw[line width=.5mm, blue,->] (0,0) -- ({2*cos(deg(8*pi/9))},{2*sin(deg(8*pi/9))});
\node at ({2.4*cos(deg(8*pi/9))},{2.4*sin(deg(8*pi/9))}) {\large{$|\alpha^\perp\rangle$}};
---------------
\draw[line width=.5mm, red,dash pattern= on 3pt off 5pt,postaction={draw,blue,dash pattern= on 3pt off 5pt,dash phase=4pt,thick},->] (0,0) -- (0,2);
\node at (0,2.25) {\large{$|0\rangle$}};
---------------
\draw[fill=red!15] (0,0) -- ({deg(3*pi/2)}:2) arc[start angle={deg(3*pi/2)}, end angle={deg(10*pi/9)}, radius=2] -- cycle;
\draw[line width=.5mm, red,->] (0,0) -- ({2*cos(deg(10*pi/9))},{2*sin(deg(10*pi/9))});
\node at ({2.4*cos(deg(10*pi/9))},{2.4*sin(deg(10*pi/9))}) {\large{$|\beta^\perp\rangle$}};
---------------
\draw[fill=red!15] (0,0) -- ({deg(3*pi/2)}:2) arc[start angle={deg(3*pi/2)}, end angle={deg(17*pi/9)}, radius=2] -- cycle;
\draw[line width=.5mm, red,->] (0,0) -- ({2*cos(deg(17*pi/9))},{2*sin(deg(17*pi/9))});
\node at ({2.35*cos(deg(17*pi/9))},{2.35*sin(deg(17*pi/9))}) {\large{$|\alpha\rangle$}};
---------------
\draw[line width=.5mm, blue,dash pattern= on 3pt off 5pt,postaction={draw,red,dash pattern= on 3pt off 5pt,dash phase=4pt,thick},->] (0,0) -- (0,-2);
\node at (0,-2.25) {\large{$|1\rangle$}};
---------------
\draw[line width=.2mm, dotted,->] (0,0) -- (2,0);
\node at (2.2,0) {\large{$x$}};
---------------
\node at (2.5,2.2) {\large{$|\beta\rangle:=\sqrt{\frac{2}{3}}|0\rangle+\frac{1}{\sqrt{3}}|1\rangle$}};
\node at (2.5,-2.2) {\large{$|\alpha\rangle:=\sqrt{\frac{2}{3}}|1\rangle+\frac{1}{\sqrt{3}}|0\rangle$}};
\end{tikzpicture}
\caption{(Color online) Twisted-butterfly measurement \(\mathrm{M}_{tb}\): In Alice's part the projectors \(\{\mathrm{P}_{\hat{z}},\mathrm{P}_{\hat{z}^\perp},\mathrm{P}_{\hat{\beta}},\mathrm{P}_{\hat{\alpha}^\perp}\}\) are involved, while in Bob parts the projectors \(\{\mathrm{P}_{\hat{z}},\mathrm{P}_{\hat{z}^\perp},\mathrm{P}_{\hat{\alpha}},\mathrm{P}_{\hat{\beta}^\perp}\}\) are used.}
\vspace{-.5cm}
\label{fig2}
\end{figure}
 
\noindent Theorem \ref{theo3}, however, does not fully resolve the question of whether all product POVMs can be simulated with a finite amount of classical communication from Alice to Bob, as there exist measurements involving only rank-1 product effects, but not LOCC implementable. For instance, consider the following POVM:
\begin{align}
\mathrm{M}_{tb}\equiv\left\{\!\begin{aligned}
&\hspace{1.2cm}\Pi_1:=\mathrm{P}_{\hat{z}}\otimes\mathrm{P}_{\hat{z}^\perp}\\
\Pi_{21}&:=\kappa~\mathrm{P}_{\hat{\alpha}^\perp}\otimes\mathrm{P}_{\hat{z}},\Pi_{22}:=\kappa~\mathrm{P}_{\hat{z}^\perp}\otimes\mathrm{P}_{\hat{\alpha}}\\
\Pi_{31}&:=\kappa~\mathrm{P}_{\hat{\beta}}\otimes\mathrm{P}_{\hat{z}},\Pi_{32}:=\kappa~\mathrm{P}_{\hat{z}^\perp}\otimes\mathrm{P}_{\hat{\beta}^\perp}
\end{aligned}\right\},   
\end{align}
where \(\kappa:=3/4\). We call this the {\it twisted-butterfly} POVM \(\mathrm{M}_{tb}\), a name justified by its structure (see Fig.\ref{fig2}).
\begin{Lemma}\label{lemma1}
The POVM \(\mathrm{M}_{tb}\) is not implementable by Alice and Bob under the operational paradigm of LOCC.    
\end{Lemma}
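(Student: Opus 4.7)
The proof strategy is contradiction along the lines of nonlocality-without-entanglement: assume $\mathrm{M}_{tb}$ admits an LOCC protocol and then show that every first move of either party fails to fit the rank-1 product structure of the effects. Any LOCC protocol is a convex mixture, over shared randomness $r$, of deterministic LOCC strategies $\mathcal{P}_r$ each realising a POVM $\{E^r_j\}$ with $\sum_r p_r E^r_j=\Pi_j$. Because every $\Pi_j$ is a rank-1 product operator, positivity together with support containment forces $E^r_j=\lambda^r_j\,|\alpha_j\rangle\langle\alpha_j|\otimes|\beta_j\rangle\langle\beta_j|$ for some $\lambda^r_j\ge 0$; every deterministic $\mathcal{P}_r$ thus realises a POVM whose non-zero effects lie along exactly the five rank-1 product directions appearing in $\mathrm{M}_{tb}$.

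I would then unroll the LOCC tree of a single $\mathcal{P}_r$ and focus on its first non-trivial measurement, which I may assume rank-1 after local refinement. Writing this outcome as $A_\ell=\mu_\ell|u_\ell\rangle\langle u_\ell|$ (the Bob-first case proceeds analogously after swapping $A\leftrightarrow B$), a standard sum over all subsequent paths gives
\begin{equation*}
  \sum_j C_{\ell,j}=A_\ell\otimes I_B,\qquad C_{\ell,j}\ge 0,
\end{equation*}
where $C_{\ell,j}$ is the total contribution of paths starting from outcome $\ell$ to the $j$-th effect. Since each $C_{\ell,j}$ must lie in $\mathrm{span}(|\alpha_j\rangle\otimes|\beta_j\rangle)$, one concludes $|u_\ell\rangle\parallel|\alpha_j\rangle$ whenever $C_{\ell,j}\neq 0$, with $C_{\ell,j}=c_{\ell,j}|\alpha_j\rangle\langle\alpha_j|\otimes|\beta_j\rangle\langle\beta_j|$. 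Inserting this into the above identity reduces the admissibility of $|u_\ell\rangle$ to the existence of non-negative $c_{\ell,j}$ satisfying
\begin{equation*}
  \sum_{j:\,|\alpha_j\rangle\parallel|u_\ell\rangle} c_{\ell,j}\,|\beta_j\rangle\langle\beta_j|=\mu_\ell\,I_B.
\end{equation*}

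Finally I would run this constraint through each of the four Alice-side directions appearing in $\mathrm{M}_{tb}$, namely $\{|\hat z\rangle,|\hat\alpha^\perp\rangle,|\hat\beta\rangle,|\hat z^\perp\rangle\}$. Three of them appear in a unique $\Pi_j$, so the constraint collapses to a single rank-1 Bob projector equalling $\mu_\ell I_B$---impossible unless $\mu_\ell=0$; the remaining direction $|\hat z^\perp\rangle$ is shared by $\Pi_{22}$ and $\Pi_{32}$, and the $2\times 2$ identity $c_1|\hat\alpha\rangle\langle\hat\alpha|+c_2|\hat\beta^\perp\rangle\langle\hat\beta^\perp|=\mu_\ell I_B$ gives diagonal entries $(c_1+c_2)/3$ and $2(c_1+c_2)/3$, again forcing $\mu_\ell=0$. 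The symmetric analysis on Bob's four directions yields the analogous obstructions, with the geometrically substantive case being $|\hat z\rangle$ (shared by $\Pi_{21}$ and $\Pi_{31}$), where the mismatched diagonals of $|\hat\alpha^\perp\rangle\langle\hat\alpha^\perp|+|\hat\beta\rangle\langle\hat\beta|$ force the weight to vanish. Since neither party can make a non-trivial first move in any deterministic strategy, no such $\mathcal{P}_r$ exists, contradicting the assumption. The main technical hurdle will be a clean derivation of the tree identity $\sum_j C_{\ell,j}=A_\ell\otimes I_B$ for arbitrary finite back-and-forth LOCC; the two $2\times 2$ matrix identities are then short and mechanical, and encode the geometric core of the twisted butterfly's non-LOCC character.
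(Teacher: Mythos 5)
Your proof is correct in substance, but it takes a genuinely different route from the paper. The paper's argument is a short reduction: it observes that $\mathrm{M}_{tb}$ perfectly discriminates the orthonormal triple $S_3=\{\,|01\rangle,\;(|\phi^-\rangle-|10\rangle)/\sqrt{2},\;(|\phi^-\rangle+|10\rangle)/\sqrt{2}\,\}$ (i.e.\ $\Tr[\Pi_1\mathrm{P}_{\psi_j}]=\delta_{1j}$ and $\Tr[(\Pi_{i1}+\Pi_{i2})\mathrm{P}_{\psi_j}]=\delta_{ij}$), and since two of those states are entangled, $S_3$ is LOCC-indistinguishable by the Walgate--Hardy result for $2\times 2$ systems; an LOCC implementation of $\mathrm{M}_{tb}$ would contradict this. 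Your approach is instead a self-contained, Bennett-et-al.-style ``no nontrivial first move'' obstruction: you refine the first measurement to rank one, use the tree identity $\sum_j C_{\ell,j}=A_\ell\otimes I_B$ together with positivity and the rank-one product structure of each $\Pi_j$ to force $|u_\ell\rangle$ onto one of the four Alice-side directions $\{\hat z,\hat\alpha^\perp,\hat\beta,\hat z^\perp\}$, and then check each case. The uniquely-occurring directions force $\mu_\ell=0$ trivially, and the shared direction $\hat z^\perp$ fails because $\mathrm{P}_{\hat\alpha}$ and $\mathrm{P}_{\hat\beta^\perp}$ both have diagonal $(1/3,2/3)$ in the computational basis, so no nonnegative combination equals $\mu_\ell I_B$ unless $\mu_\ell=0$; the Bob-first case is symmetric with $\hat z$ shared and $\mathrm{P}_{\hat\alpha^\perp},\mathrm{P}_{\hat\beta}$ both having diagonal $(2/3,1/3)$. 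These computations all check out. The trade-off: the paper's route is essentially a two-line citation and spotlights the entanglement of the Duan et al.\ discriminating set, whereas your route is more elementary and does not depend on any discrimination theorem, but requires carefully justifying the ``assume the first nontrivial measurement is rank one'' refinement and the $C_{\ell,j}$ tree identity for general finite-round LOCC, which you correctly flag as the remaining technical work. You should also explicitly rule out the degenerate case of an LOCC strategy that makes no measurement at all (then every effect would be proportional to $I_4$, contradicting $\Pi_1$ being rank one), which closes the case analysis.
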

\begin{proof}
The proof simply follows an argument provided in \cite{Duan2009}. The measurement \(\mathrm{M}_{tb}\) perfectly distinguishes the set of orthonormal states \(S_3\equiv\{\ket{\psi_1}:=\ket{01},~\ket{\psi_2}:=(\ket{\phi^-}-\ket{10})/\sqrt{2},~\ket{\psi_3}:=(\ket{\phi^-}+\ket{10})/\sqrt{2}\}\subset\mathbb{C}^2\otimes\mathbb{C}^2\), as
\(\Tr[\Pi_1\mathrm{P}_{\psi_j}]=\delta_{1j}\) and \(\Tr[(\Pi_{i1}+\Pi_{i2})\mathrm{P}_{\psi_j}]=\delta_{ij}\), for \(i\in\{2,3\}~\wedge~j\in\{1,2,3\}\); here \(\ket{\phi^-}:=(\ket{00}-\ket{11})/\sqrt{2}\). On the other hand, \(\ket{\psi_2}~\&~\ket{\psi_3}\) being entangled, the set \(S_3\) is LOCC indistinguishable \cite{Walgate2002}; and hence proves the claim.  
\end{proof}
\noindent Although the measurement \(\mathrm{M}_{tb}\) is not LOCC implementable, quite interestingly, it turns out that the statistics of this measurement on a qubit state known to Alice and an unknown qubit state provided to Bob can be simulated at Bob's end with finite classical communication from Alice (see Appendix \ref{appbutterfly}). We in fact establish a quite generic result (proof provided in Appendix \ref{apptheo4}).  
\begin{theorem}\label{theo4}
Statistics of any separable measurement on a quantum state known to Alice and an unknown state of another quantum system provided to Bob, can always be simulated at Bob's end by finite classical communication from Alice to Bob. 
\end{theorem}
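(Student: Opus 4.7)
My plan is to turn the finite decomposition of a separable effect into an explicit one-round protocol in the same spirit as the product-measurement proof of Theorem~\ref{theo3}. Because the cone of separable positive operators on $\mathbb{C}^{d_A}\otimes\mathbb{C}^{d_B}$ sits in a real vector space of dimension $d_A^2d_B^2$, Carath\'{e}odory's theorem guarantees that every effect $E^k_{AB}$ of the separable POVM admits a representation $E^k_{AB}=\sum_{i=1}^{N_k}\lambda_{k,i}\,|\alpha_{k,i}\rangle\!\langle\alpha_{k,i}|\otimes|\beta_{k,i}\rangle\!\langle\beta_{k,i}|$ with $\lambda_{k,i}>0$ and a finite number of rank-one product terms. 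Collecting the refinement indices into a composite label $m=(k,i)$, the message alphabet is finite with $N=\sum_k N_k$ entries, transmissible in $\lceil\log_2 N\rceil$ classical bits.

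The protocol would then run as follows. From $\sum_k E^k_{AB}=I_{AB}$ and partial tracing over $B$, the operators $\{\lambda_{k,i}\,|\alpha_{k,i}\rangle\!\langle\alpha_{k,i}|/d_B\}_{(k,i)}$ form a bona fide POVM $\mathcal{P}^A$ on Alice's system. Since Alice holds the classical description of $\psi$, she samples $m=(k,i)$ from the Born distribution of $\mathcal{P}^A$ at $\psi$ locally and transmits $m$ using $\lceil\log_2 N\rceil$ bits. Conditioned on receiving $m=(k,i)$, Bob performs on $\phi$ a POVM $\{F^{k'\mid m}\}_{k'}$ whose matched outcome ($k'=k$) has effect proportional to $|\beta_{k,i}\rangle\!\langle\beta_{k,i}|$ and whose remaining outcomes distribute the leftover probability mass among alternative output labels. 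Matching the effective statistics to the target amounts to enforcing
\begin{equation*}
\sum_m \Tr(P^A_m\,\psi)\,F^{k'\mid m}=\Tr_A[E^{k'}(\psi\otimes I_B)]
\end{equation*}
for every $\psi$; since both sides are linear in $\psi$, this reduces to a finite linear system in the $F^{k'\mid m}$'s whose solution forces $\Pr(k'|\psi,\phi)=\Tr[E^{k'}(\psi\otimes\phi)]$ when Bob outputs $k'$.

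The main obstacle is solving this linear system consistently, i.e., simultaneously ensuring that (i) $\mathcal{P}^A$ is a POVM on $A$, (ii) each $\{F^{k'\mid m}\}_{k'}$ is a POVM on $B$ with $F^{k'\mid m}\ge 0$ and $\sum_{k'}F^{k'\mid m}=I_B$, and (iii) the averaging identity above holds. A naive choice in which Bob puts all his probability on the matched outcome overshoots $I_B$ by a factor of $d_B$, because $\sum_{k,i}\lambda_{k,i}|\alpha_{k,i}\rangle\!\langle\alpha_{k,i}|=d_B I_A$ forces the $\mathcal{P}^A$ coefficients to be too small to be compensated by rank-one $|\beta_{k,i}\rangle\!\langle\beta_{k,i}|$ alone. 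The remedy, where I expect the technical effort to concentrate, is to refine the decomposition so that each $B$-factor has operator norm at most $1/d_B$ (absorbing contributions proportional to $I_B/d_B$) and to redistribute the surplus on Bob's side among the $k'\neq k$ outcomes so that, after averaging against Alice's sampling, the spurious contributions cancel by the POVM identity $\sum_{k'}E^{k'}_{AB}=I_{AB}$. The explicit simulation of the twisted-butterfly POVM $\mathrm{M}_{tb}$ in Appendix~\ref{appbutterfly}---which achieves a finite-communication simulation even though $\mathrm{M}_{tb}$ is not LOCC-implementable by Lemma~\ref{lemma1}---serves as the template for this construction, with Renner's qudit POVM simulation invoked on Alice's side if additional slack is needed for large $d_A$.
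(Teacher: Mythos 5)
Your plan replaces the paper's key Lemma~\ref{lemma2} with a Carath\'{e}odory decomposition of each separable effect into finitely many rank-one product terms, and then has Alice sample the message $m$ from the \emph{Born} distribution of a fixed POVM $\mathcal{P}^A$ on her known state $\psi$, i.e.\ $p(m\mid\psi)=\Tr[P^A_m\,\psi]$, linear in $\psi$. That linearity is where the proof breaks. If Alice's message distribution is of the form $\Tr[P^A_m\,\psi]$ with $\{P^A_m\}$ a POVM on $A$, then the total statistics $\sum_m \Tr[P^A_m\psi]\,\Tr[F^{k'\mid m}\phi]$ equal $\Tr[E^{k'}(\psi\otimes\phi)]$ for all $\psi,\phi$ if and only if $\sum_m P^A_m\otimes F^{k'\mid m}=E^{k'}_{AB}$ for every $k'$, which together with $\sum_{k'}F^{k'\mid m}=I_B$, $F^{k'\mid m}\ge 0$, is exactly the statement that $\{E^{k'}_{AB}\}$ is implementable by one-way LOCC from Alice to Bob. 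But the theorem must also handle separable measurements that are \emph{not} one-way LOCC implementable; indeed, the paper's own twisted-butterfly POVM $\mathrm{M}_{tb}$ is shown in Lemma~\ref{lemma1} to be LOCC-inimplementable, yet it is separable and its statistics are simulated in Appendix~\ref{appbutterfly}. For such a POVM, the linear system you set up has no valid solution, and no refinement of the Carath\'{e}odory decomposition can repair this: the obstruction is structural, not a matter of spreading operator norm more cleverly.

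The paper avoids this by exploiting that Alice knows $\psi$ \emph{classically}, not merely holding a physical copy to measure. This lets her message distribution depend on $\psi$ in an arbitrary, and in general \emph{non-linear}, way. Concretely, Lemma~\ref{lemma2} shows that the $\psi$-dependent effective POVM $\mathrm{M}^\psi_{R1}=\{p(i,\psi)\mathrm{P}_{v_i}\}$ on Bob's side decomposes as a convex mixture $\sum_{\lambda=1}^{L}\mu_\lambda(\psi)\,\mathrm{M}^\lambda$ over a \emph{finite}, $\psi$-independent set of rank-one extremal POVMs on $B$ (finite because the projectors $\{\mathrm{P}_{v_i}\}$ are finitely many, and each extremal POVM corresponds to a choice of a linearly independent subset with uniquely determined weights summing to $I_B$). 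Alice samples $\lambda\sim\mu_\lambda(\psi)$ — these mixing weights involve max's and are manifestly non-linear in $\psi$, see the explicit $\mu_1(\psi)=\max\{0,\tfrac18(1-2\sqrt{2}\psi_x+3\psi_z)\}$ in Appendix~\ref{appbutterfly} — and sends $\lambda$ in $\log_2 L$ bits; Bob then applies $\mathrm{M}^\lambda$. So the finiteness comes from the finite number of extremal POVMs built from the fixed Bob-side projectors, not from a Carath\'{e}odory bound, and the crucial freedom is the non-linear $\psi$-dependence of Alice's coin, which your construction forbids. I recommend replacing the fixed POVM $\mathcal{P}^A$ by a $\psi$-dependent coin over extremal Bob-side POVMs; that is exactly the paper's route.
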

\noindent It is worth recalling the Gottesman–Knill theorem \cite{Gottesman1998} here, which shows that stabilizer quantum circuits are classically simulable, with quantum advantage arising only through the inclusion of non-stabilizer (`magic’) resources such as T-states \cite{Bravyi205}. In parallel, our Theorem~\ref{theo4} demonstrates that classical simulation of a qubit channel is possible when measurements are restricted to separable ones, thereby requiring entangled basis measurements for exhibiting non-classical behavior. Notably, in our setting, it is the use of a specific entangled basis measurement that is stabilizer-preserving yet leads to non-classical behavior.

Here, we note that Theorem \ref{theo4} admits a natural generalization to multipartite settings. In this scenario, multiple distant senders—Alice-$1$, Alice-$2,\cdots$, Alice-$n$—each receive a classical description of a local quantum state $\psi_{A_i} \in \mathbb{C}^{d_i}_{A_i}$, known only to the $i^\text{th}$ sender. The receiver, Bob, holds an unknown state $\phi_B \in \mathbb{C}^{d}_B$ and aims to reproduce the statistics of a $K$-outcome measurement $\mathrm{M}_{A_1\cdots A_nB} \equiv \{\Pi^b_{A\_1\cdots A_nB}~|~b=1,\cdots, k\}$ on the joint state $\otimes_{i=1}^n \psi_{A_i} \otimes \phi_B$. As shown in Theorems \ref{theo4SA} and \ref{theo4SB} (see Appendix \ref{appmultisep}), the statistics can be simulated using only finite classical communication among the parties whenever each of the effects $\Pi^b_{A_1\cdots A_nB}$'s are fully separable \cite{Horodecki2009}.

\subsection*{Simulating noisy qubit channels} Thus far, we have focused on the simulation of perfect qubit channel. A natural extension is to ask whether the no-go result of Theorem \ref{theo1} applies to imperfect qubit channels. To address this, we consider the qubit depolarizing channel \(\mathrm{D}_\eta: \mathcal{D}(\mathbb{C}^2) \to \mathcal{D}(\mathbb{C}^2)\), defined as \(\mathrm{D}_\eta(\rho) := \eta\rho + (1-\eta)\frac{\mathbf{I}_2}{2}\), where \(\eta \in [0,1]\). We now analyze the classical simulability of this particular class of channels.
\begin{theorem}\label{theo5}
For all \(\eta \in [0,1)\), the qubit depolarizing channel \(\mathrm{D}_\eta\) can be simulated with a finite amount of classical communication from Alice to Bob. The required communication increases as \(\eta \to 1\).    
\end{theorem}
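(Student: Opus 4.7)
The plan is to exploit the fact that, for $\eta<1$, the output state $\mathrm{D}_\eta(\mathrm{P}_{\hat\psi})$ lives strictly inside the Bloch ball and can therefore be realised as a convex mixture over a fixed finite codebook of pure states; this reduces the simulation of $\mathrm D_\eta$ to Alice sending a single label of bounded size. Concretely, I first fix a finite set $\mathcal V_N=\{\hat v_i\}_{i=1}^N\subset\mathcal S^2$ whose convex hull in $\mathbb R^3$ contains the closed ball of radius $\eta$; such a polytope exists with $N<\infty$ whenever $\eta<1$, e.g., taking the vertices of any sufficiently fine triangulation of the sphere. Since $\mathrm D_\eta(\mathrm P_{\hat\psi})$ has Bloch vector $\eta\hat\psi$ lying inside this polytope, Carath\'eodory's theorem supplies weights $p_i(\hat\psi)\ge 0$ summing to $1$ such that
\begin{equation*}
\mathrm{D}_\eta(\mathrm{P}_{\hat\psi}) \;=\; \sum_{i=1}^{N} p_i(\hat\psi)\,\mathrm{P}_{\hat v_i}.
\end{equation*}

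The one-round protocol is then immediate: Alice samples $i$ with probability $p_i(\hat\psi)$ and sends it to Bob using at most $\lceil \log_2 N\rceil$ bits. Given $i$ and the target joint measurement $\mathrm M_{AB}=\{\mathrm E^k_{AB}\}$, Bob constructs the effective POVM on his register, $\tilde{\mathrm E}^{\,k}_{B,i}:=\Tr_A[(\mathrm P_{\hat v_i}\otimes \mathbf I_B)\mathrm E^k_{AB}]$, and measures it on his (possibly purified) state $\rho_{BC}$. A short computation then shows that, for any auxiliary measurement $\{\mathrm F^l_C\}$ on an external reference $C$, the joint distribution of $(k,l)$ produced by the protocol equals $\Tr[(\mathrm D_\eta(\mathrm P_{\hat\psi})\otimes\rho_{BC})(\mathrm E^k_{AB}\otimes\mathrm F^l_C)]$, matching the generic Born statistics of the channel regardless of the dimension of $B$ or any entanglement with $C$.

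For the second part of the theorem, the construction yields the upper bound $\mathrm C(\eta)\le \lceil\log_2 N_{\min}(\eta)\rceil$, where $N_{\min}(\eta)$ is the minimum number of points on $\mathcal S^2$ whose convex hull contains the ball of radius $\eta$; elementary spherical geometry forces $N_{\min}(\eta)\to\infty$ as $\eta\to 1$. To promote this to a genuine lower bound, I would argue by contradiction: if some fixed $C$ sufficed for all $\eta$ in a sequence $\eta_n\to 1$, then compactness of the space of $C$-bit one-round protocols (a product of weak-$*$ compact Markov-kernel spaces, whose image under the continuous map ``protocol $\mapsto$ simulated channel'' is closed) would yield a limit protocol exactly simulating $\mathrm D_1=\mathbf I$, contradicting Theorem~\ref{theo2}; Proposition~\ref{prop1} then rules out finite back-and-forth schemes as well. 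The main obstacle lies in making this compactness step rigorous, since the protocols depend measurably on the continuous input $\hat\psi$ and the simulated channel is a bilinear functional of Alice's encoding and Bob's decoding kernels; by contrast, the constructive upper-bound part of the argument is elementary once the polytope decomposition is in hand.
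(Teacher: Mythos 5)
Your constructive upper bound is correct and takes a genuinely different route from the paper. The paper's protocol uses a Haar-random shared unitary $\mathcal X$: Alice picks the predefined Bloch vector $\hat\omega_{i^\star}$ whose $\mathcal X$-conjugate maximizes overlap with $\hat\psi$, sends $i^\star$, and Bob prepares $\mathcal X\mathrm P_{\hat\omega_{i^\star}}\mathcal X^\dagger$; averaging over $\mathcal X$, the Bloch vector of the prepared state is argued to be uniformly distributed over a spherical cap of half-angle $\theta_m$ around $\hat\psi$, which integrates to $\eta(\theta_m)\hat\psi$. Your approach instead decomposes $\eta\hat\psi$ directly as a convex combination over a fixed polytope of pure states using Carath\'eodory, with Alice sampling the weights using her private randomness only. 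This buys you several things: no shared randomness, an exactly specified target $\eta$ rather than the implicitly defined $\eta(m)$, and—importantly—a decomposition that holds identically for every $\hat\psi$ without appealing to the spherical-cap idealization (the paper's averaging argument treats the Voronoi cell of $\hat\omega_{i^\star}$ as an exact cap, which is only strictly valid for the antipodal $m=1$ configuration). What the paper's version buys is a concrete, rotationally symmetric protocol with a closed-form $\eta(\theta_m)$ and tabulated values for small $m$. On the ``required communication increases'' clause you are right to flag the gap: the paper, like you, only exhibits a family of protocols whose cost grows with $\eta$; it does not prove a lower bound on all protocols. Your sketch via monotonicity of $C(\eta)$ (by post-composing with local depolarization) plus a limit argument invoking Theorem~\ref{theo2} is the right idea, and the compactness issue you identify is a genuine one that the paper does not address either—so your honesty here matches the actual state of the proof.
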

\begin{proof}
Given a known state \(\psi = \frac{1}{2}(\mathbf{I}_2 + \hat{\psi} \cdot \vec{\sigma})\), if Alice can ensure that the state \(\mathrm{D}_\eta(\psi) = \frac{1}{2}(\mathbf{I}_2 + \eta~\hat{\psi} \cdot \vec{\sigma})\) is reproduced at Bob's laboratory, then any generic measurement statistics can also be reproduced by Bob. Let Alice be allowed to communicate \(m\) classical bits to Bob. To reproduced the state \(\mathrm{D}_\eta(\psi)\) at Bob's end their protocol proceeds as follows:- (i) Shared Randomness: Alice and Bob share a classical random variable \(\mathcal{X} \in \mathrm{U}(\mathbb{C}^2)\), which is drawn Haar-randomly from the set of unitary operators on \(\mathbb{C}^2\). (ii) Predefined States: Before the protocol begins, Alice and Bob agree on a set of \(2^m\) equally spaced Bloch vectors \(\{\hat{\omega}_i\}_{i=1}^{2^m}\) with the corresponding qubit states \(\{\omega_i\}_{i=1}^{2^m}\). (iii) Overlap Computation and Communication: Given the input state \(\psi\), Alice computes the overlaps \(\Tr[\mathcal{X} \mathrm{P}_{\hat{\omega}_i} \mathcal{X}^\dagger \mathrm{P}_{\hat{\psi}}]\) for all \(i\) and identifies the index \(i^\star\) that maximizes this overlap. She communicates the index \(i^\star\) to Bob using \(m\)-bit classical communication. (iv) State Preparation at Bob's End: Upon receiving \(i^\star\) and having access to the shared variable \(\mathcal{X}\), Bob prepares the state \(\mathcal{X} \mathrm{P}_{\hat{\omega}_{i^\star}} \mathcal{X}^\dagger\). As shown in Appendix, on average the state \(\mathrm{D}_\eta(\psi)\) is prepared at Bob's laboratory. The parameter \(\eta\) approaches unity as the number of bits \(m\) increases, thus allowing increasingly accurate simulation of the depolarizing channel. 
\end{proof}
\noindent Let \(\eta(m)\) denote the value of the parameter \(\eta\) achieved following the above $m$-bit protocol . In general, deriving an exact expression for \(\eta(m)\) for arbitrary \(m\) is challenging, as it depends on the specific choices of Bloch vectors \(\{\hat{\omega}_i\}_{i=1}^{2^m}\) (see Appendix \ref{apptheo5} for more details). However, for small \(m\)'s we can have some natural choices of Bloch vectors -- \((m\)=\(1)\): 2 diametrically opposite vectors, yielding \(\eta(1) = 1/2\), \((m\)=\(2)\): 4 vectors forming a regular tetrahedron, yielding \(\eta(2) = (3+\sqrt{3})/6 \approx 0.789\), and \((m\)=\(3)\): 8 vectors forming the vertices of a cube, yielding \(\eta(3) = (3+\sqrt{6})/6 \approx 0.908\). 

\section{Discussions} 
No-go theorems play a central role in advancing our understanding of quantum foundations by delineating the boundaries between classical and quantum models and identifying the origins of quantum advantage in information-processing tasks. For instance, the Bell–Kochen–Specker theorem rules out broad classes of classical hidden-variable explanations for quantum statistics \cite{Bell1964, Bell1966, Kochen1967}. Similarly, the Gottesman–Knill theorem shows that stabilizer circuits can be efficiently simulated on a classical probabilistic computer \cite{Gottesman1998}, thereby isolating non-stabilizer elements—so-called ``magic"—as essential for quantum computational speedups \cite{Bravyi205, Howard2014, Campbell2017}. In this spirit, Theorem \ref{theo2} establishes a fundamental no-go result: contrary to prior expectations shaped by earlier studies \cite{Montina2012(1), Toner2003, Cerf2000, Methot2004, Renner2023}, we show that classical simulation of a quantum channel is impossible in more general setup. This reveals a deeper, previously overlooked boundary between classical and quantum resources.

Our result also opens up some foundational questions too. In standard simulation scenarios, it has been shown that reproducing the statistics of any POVM at Bob’s end via finite classical communication from Alice is possible {\it if and only if} there exists a $\psi$-epistemic model underlying quantum theory, wherein the wavefunction reflects an agent's knowledge of an underlying reality \cite{Montina2012(1)}. Extending this framework to our generalized simulation task, Theorem \ref{theo1} \& \ref{theo2} suggest impossibility of such $\psi$-epistemic accounts. Instead, it points toward a $\psi$-ontic interpretation of the quantum state, where the wavefunction corresponds to intrinsic physical properties of the system.

In an ontological model, the preparation of quantum states $\psi$ and $\phi$ for Alice and Bob induces a joint probability distribution $p(\lambda|\psi, \phi)$ over an ontic space $\Lambda$, which can be decomposed as $\Lambda = \Lambda_A \times \Lambda_B \times \Lambda_G$. Here, $\Lambda_A$ and $\Lambda_B$ represent the physical properties locally accessible to Alice and Bob, while $\Lambda_G$ encodes global or holistic features relevant for joint measurements. Natural locality constraints imply that $p(\lambda_A|\psi, \phi)$ is independent of $\phi$, and $p(\lambda_B|\psi, \phi)$ is independent of $\psi$. For classical simulation, it is reasonable to assume that Alice’s message $m$ depends only on $\lambda_A$, not on $\lambda_B$ or $\lambda_G$. Within this setting, our result offers support for the ontic status of the qubit wavefunction, without relying on the Preparation Independence assumption invoked by the PBR theorem \cite{Pusey2012}, which has attracted criticism \cite{Schlosshauer2014}. On the other hand, inspired by insights from \cite{Henderson2000}, it would be of interest to explore whether Theorem \ref{theo1} (and \ref{theo2}) continues to hold when Bob’s unknown state is drawn from a restricted set, rather than the full state space. \\

\begin{acknowledgements}
SGN, NG, and MB acknowledge the conference ``{\it Observing a Century of Quantum Mechanics}" held at IISER Kolkata, where initial discussion of this project started. MZ thanks Giulio Gasbarri and Some Sankar Bhattacharya for their constructive comments. SGN extends sincere gratitude to Mir Alimuddin for the valuable comments and corrections provided on an earlier version of the manuscript. SGN acknowledges support from the CSIR project $09/0575(15951)/2022$-EMR-I. MZ acknowledges support from Ministerio de Ciencia e Innovación of the Spanish Government PRE2020-093634 and project PID2022-141283NB-I00 funded by MICIU/AEI/10.13039/501100011033. MZ also acknowledges MCIN with funding from European Union NextGenerationEU (PRTR-C17.I1) and by Generalitat de Catalunya,  the Ministry of Economic Affairs and Digital Transformation of the Spanish Government through the QUANTUM ENIA project: Quantum Spain, by the European Union through the Recovery, Transformation and Resilience Plan - NextGenerationEU within the framework of the “Digital Spain 2026 Agenda”. NG acknowledges support from the NCCR SwissMap. MB acknowledges the financial support through the National Quantum Mission (NQM) of the Department of Science and Technology, Government of India.
\end{acknowledgements}

\onecolumngrid
\section*{Appendix}
\section{Proof of Theorem \ref{theo1}}\label{apptheo1}
\begin{proof}
Consider that the classical description of a qubit state, \(\psi:= \frac{1}{2}(\mathbf{I}_2 + \hat{\psi} \cdot \vec{\sigma})\), is provided to Alice. Meanwhile, Bob is provided with another qubit state \(\phi := \frac{1}{2}(\mathbf{I}_2 + \hat{\phi} \cdot \vec{\sigma})\), which is unknown to both Alice and Bob. While generic simulation requires reproducing the statistics of all possible POVMs \(\mathrm{M}_{AB} \equiv \{\mathrm{E}^k_{AB}\}_k\) performed on the joint state \(\psi_A \otimes \phi_B\) even when the POVM is not known to Alice, let us focus on a specific measurement, \(\mathrm{M}_{\text{singlet}} \equiv \mathrm{M}_{\psi^-} := \{\mathrm{P}_{\psi^-}, \mathbf{I}_4 - \mathrm{P}_{\psi^-}\}\) only; where \(\ket{\psi^-} := \frac{1}{\sqrt{2}}(\ket{01} - \ket{10})\) is the singlet state. According to the Born rule, the probability of obtaining the outcome \(\ket{\psi^-}\) is given by:
\begin{align}
p_{\psi,\phi} := p(\psi^-|\psi, \phi, \mathrm{M}_{\psi^-}) = \frac{1}{4}(1 - \hat{\psi} \cdot \hat{\phi}). \label{eq1}  
\end{align}
The most general classical protocol that Alice and Bob can implement to simulate the statistics in Eq.~(\ref{eq1}) proceeds as follows: Alice generates a classical variable \(m\), sampled according to the conditional distribution \(p(m|x, \psi)\), where \(x\) is a shared random variable sampled as \(p(x)\), and \(\psi\) is the state given to Alice, uniformly sampled from the Bloch sphere; Alice communicates \(m\) to Bob; Bob performs a two-outcome POVM \(\mathrm{M}^{m,x} \equiv \{\mathrm{E}^{m,x}, \mathbf{I}_2 - \mathrm{E}^{m,x}\}\) on the unknown state \(\phi\). Since \(\phi\) is unknown to both parties, their protocols are independent of \(\phi\). Associating \(\mathrm{E}^{m,x}\) with the \(\psi^-\) outcome in Eq.~(\ref{eq1}), perfect simulation demands
\begin{align*}
p_{\psi,\phi}= \sum_m \int dx~p(x)~p(m|x,\psi)~\langle \phi | \mathrm{E}^{m,x} | \phi \rangle= \langle \phi | \mathrm{F}_\psi | \phi \rangle,
\end{align*}
where \(\mathrm{F}_\psi := \sum_m \int dx~p(x)~p(m|x,\psi)~\mathrm{E}^{m,x}\) is the effective POVM element implemented by Bob on \(\phi\), given that Alice is provided with \(\psi\).

Consider now the case where \(\phi = \psi\), leading to \(p_{\psi,\psi} = 0\), which implies \(\mathrm{F}_\psi = \beta_\psi \mathrm{P}_{\hat{\psi}^\perp}\), with \(\beta_\psi \geq 0\) for all \(\psi\); here \(\ket{\psi^\perp}\) denotes the state orthonormal to \(\ket{\psi}\), with Bloch vectors \(\hat{\psi}^\perp=-\hat{\psi}\). Next, consider the case where \(\phi = \psi^\perp\); here, \(p_{\psi,\psi^\perp} = \frac{1}{2}\), implying \(\beta_\psi = \frac{1}{2}\) for all \(\psi\). Thus, whenever Alice is given the state \(\psi\), Bob's effective POVM on \(\phi\) must take the form \(\left\{\mathrm{F}_\psi, \mathbf{I}_2 - \mathrm{F}_\psi\right\}\), where
\begin{align}\label{maineq}
\mathrm{F}_\psi = \frac{1}{2}\mathrm{P}_{\hat{\psi}^\perp}= \sum_m \int dx~p(x)~p(m|x,\psi)~\mathrm{E}^{m,x}.
\end{align}
Assume that $M$ is the cardinality of the message set $\{m\}_{m=0}^{M-1}$. Notably, without loss of any generality we can assume $p(x) > 0$ for all $x$. Let us now define the set
\begin{align}
\mathcal{P}:=\{(m,x)~|~p(m|x,\psi)>0~\text{for some}~\psi\}.
\end{align}
Since the left-hand side of Eq.~(\ref{maineq}) is a rank-1 operator for all $\psi$, it follows that $\mathrm{E}^{m,x}$ must have rank at most 1 and can be expressed as $\mathrm{E}^{m,x} = e^{m,x} \ket{\chi^{m,x}}\bra{\chi^{m,x}}$, where $0 \leq e^{m,x} \leq 1$ for all $(m,x) \in \mathcal{P}$. Furthermore, we can assume that $\mathrm{E}^{m,x} = e^{m,x} \ket{\chi^{m,x}}\bra{\chi^{m,x}}$ holds even for $(m,x) \notin \mathcal{P}$, since such terms do not contribute to the right-hand side of Eq.~(\ref{maineq}) for any $\psi$. Substituting this decomposition into Eq.~(\ref{maineq}), we obtain
\begin{align*}
\frac{1}{2}\mathrm{P}_{\hat{\psi}^\perp}= \sum_m \int dx~p(x)~p(m|x,\psi)~e^{m,x} \ket{\chi^{m,x}}\bra{\chi^{m,x}}.
\end{align*}
Next, we define the set $\Lambda^m_{\psi}:= \{x~|~p(m|x,\psi) e^{m,x} > 0\}$. Thus, we can rewrite Eq.~(\ref{maineq}) as
\begin{align}\label{maineq1}
\frac{1}{2}\mathrm{P}_{\hat{\psi}^\perp}= \sum_m \int_{\Lambda^m_{\psi}} dx~p(x)~p(m|x,\psi)~e^{m,x} \ket{\chi^{m,x}}\bra{\chi^{m,x}},
\end{align}
Since the left-hand side of Eq.~(\ref{maineq1}) is proportional to $\mathrm{P}_{\hat{\psi}^\perp}$, it follows that $\ket{\chi^{m,x}}\bra{\chi^{m,x}} = \mathrm{P}_{\hat{\psi}^\perp},~\forall~x\in\Lambda^m_{\psi}$. Thus, we must also have $\Lambda^m_{\psi} \cap \Lambda^m_{\psi^{\prime}} = \emptyset,~ \forall~m~,~\psi\neq\psi^{\prime}$.
Taking the trace on both sides of Eq.~(\ref{maineq1}), we obtain
\begin{align}\label{maineq2}
\frac{1}{2} &= \sum_m \int_{\Lambda^m_{\psi}} dx~p(x)~p(m|x,\psi)~e^{m,x} \leq \sum_m \int_{\Lambda^m_{\psi}} dx~p(x) \nonumber\\
&\leq M \times \max_{m} \left\{\int_{\Lambda^m_{\psi}} dx~p(x) \right\},~\forall~\psi.
\end{align}
Since there are uncountably many values of $\psi$ and only finitely many values of $m$, by the pigeonhole principle, there must exist some $m_0$ such that $\int_{\Lambda^{m}_{\psi}} dx~p(x)$ attains its maximum for $m=m_0$ for uncountably many values of $\psi$. Consequently, we must have
\begin{align}\label{ineq}
\frac{1}{2M} \leq \int_{\Lambda^{m_0}_{\psi}} dx~p(x),
\end{align}
for uncountably many values of $\psi$, while also satisfying the disjoint-ness condition:
\begin{align}\label{disjoint}
\Lambda^{m_0}_{\psi} \cap \Lambda^{m_0}_{\psi^{\prime}} = \emptyset, \quad \forall~\psi\neq\psi^{\prime}.
\end{align}
However, conditions (\ref{ineq}) and (\ref{disjoint}) are impossible to satisfy for any finite $M$, as the total probability must satisfy $\int dx~p(x) = 1$. Thus, we arrive at a contradiction, proving that it is impossible to simulate a qubit with finite classical communication. This completes the proof. 
\end{proof}

\section{Detailed proof of Proposition \ref{prop1}}\label{appprop1}
We first start by proving any $3$ round protocol with finite communication in every round can be implemented by a $1$ way finite communication protocol 
\subsection{3-way communication protocol}\label{appendix3way}
Let $x$ be the shared variable between Alice and Bob. Alice starts by tossing a coin $q(m_1|\psi x)$ and communicating $m_1$ to Bob. Bob performs a measurement $M_{m_1x}\equiv\{M^{m_2}_{m_1x}\}$. Here writing general measurement operators is necessary, as POVM elements do not uniquely specify the post measurement state, which is essential for knowing statistics for further measurements. We have $\sum_{m_2}M^{m_2\dagger}_{m_1x}M^{m_2}_{m_1x}=\mathbb{I}~\forall~m_1,x$. Next Bob communicated his outcome $m_2$ back to Alice and then Alice tosses the coin $r(m_3|m_1m_2\psi x)$ and communicates $m_3$. Bob then performs a POVM measurement $M_{m_1m_2m_3x}=\{\pi^b_{m_1m_2m_3x}\}$ on the updated state. Here $\sum_b \pi^b_{m_1m_2m_3x}=\mathbb{I}~\forall~m_1,m_2,m_3,x$. Mathematically the correlation $p(b|\psi\phi)$ can be written as follows:
\begin{subequations}
\begin{align}
p(b|\psi\phi)&=\sum_{m_1m_2m_3x}p(bm_1m_2m_3x|\psi\phi)=\sum_{m_1m_2m_3x}p(x|\psi\phi)p(bm_1m_2m_3|\psi\phi x)\nonumber\\
&=\sum_{m_1m_2m_3 x}\mu( x)p(m_1|\psi\phi x)p(bm_2m_3|m_1\psi\phi x)\nonumber\\
&=\sum_{m_1m_2m_3 x}\mu( x)q(m_1|\psi x)p(m_2|m_1\psi\phi x)p(bm_3|m_1m_2\psi\phi x)\nonumber\\
&=\sum_{m_1m_2m_3 x}\mu( x)q(m_1|\psi x)~\Tr[M^{m_2}_{m_1 x}\phi M^{m_2\dagger}_{m_1 x}]~p(bm_3|m_1m_2\psi\phi x)\nonumber\\
&=\sum_{m_1m_2m_3 x}\mu( x)q(m_1|\psi x)~\Tr[M^{m_2}_{m_1 x}\phi M^{m_2\dagger}_{m_1 x}]~p(m_3|m_1m_2\psi\phi x)p(b|m_1m_2m_3\psi\phi x)\nonumber\\
&=\sum_{m_1m_2m_3 x}\mu( x)q(m_1|\psi x)~\Tr[M^{m_2}_{m_1 x}\phi M^{m_2\dagger}_{m_1 x}]~p(m_3|m_1m_2\psi\phi x)p(b|m_1m_2m_3\psi\phi x)\nonumber\\
&=\sum_{m_1m_2m_3 x}\mu( x)q(m_1|\psi x)~\Tr[M^{m_2}_{m_1 x}\phi M^{m_2\dagger}_{m_1 x}]~r(m_3|m_1m_2\psi x)p(b|m_1m_2m_3\psi\phi x)\nonumber\\
&=\sum_{m_1m_2m_3 x}\mu( x)q(m_1|\psi x)~\Tr[M^{m_2}_{m_1 x}\phi M^{m_2\dagger}_{m_1 x}]~r(m_3|m_1m_2\psi x)\times\nonumber\\
&\hspace{5cm}\left\{\Tr\left[\pi^b_{m_1m_2m_3 x}\left(\frac{M^{m_2}_{m_1 x}\phi M^{m_2\dagger}_{m_1 x}}{\Tr[M^{m_2}_{m_1 x}\phi M^{m_2\dagger}_{m_1 x}]}\right)\right]\right\}\nonumber\\
&=\sum_{m_1m_2m_3 x}\mu( x)q(m_1|\psi x)r(m_3|m_1m_2\psi x)\left\{\Tr\left[\pi^b_{m_1m_2m_3 x}\left(M^{m_2}_{m_1 x}\phi M^{m_2\dagger}_{m_1 x}\right)\right]\right\}\label{3wayeq}
\end{align}    
\end{subequations}
In the next subsection we show that the above correlation can be simulated by finite $1$ way protocol. 
\subsection{1-way communication protocol to simulate the 3-way protocol in \ref{appendix3way}} 
Here Alice tosses the following coins $q(m_1|\psi x),r(\eta_1|\psi x m_1,m_2=1),r(\eta_2|\psi x m_1,m_2=2),\cdots,r(\eta_k|\psi x m_1,m_2=k)$. Here we have assumed $m_2$ takes values in the set $\{1,\cdots,k\}$. Alice then communicates the outcomes $m_1,\eta_1,\cdots,\eta_k$. Bob follows the same protocol for first measurement and gets an outcome $m_2$ and then he generates $m_3=\eta_{m_2}$ and then follows the same protocol as in the $3$ way case. Here we aim to prove that this particular protocol yields the same $p(b|\psi\phi)$ as in Eq.(\ref{3wayeq}). 
\begin{subequations}
\begin{align}
p(b|\psi\phi)&=\sum_{m_1 x}p(bm_1 x|\psi\phi)=\sum_{m_1 x}\mu( x)q(m_1|\psi x)p(b|m_1\psi\phi x)\label{2c} 
\end{align}
\end{subequations}
Now we try to write $p(b|m_1\psi\phi x)$
\begin{subequations}
\begin{align}
p(b|m_1\psi\phi x)&=\sum_{\eta_1\cdots\eta_k}p(b\eta_1\cdots\eta_k|m_1\psi\phi x)=\sum_{\eta_1\cdots\eta_k}p(\eta_1|m_1\psi\phi x)p(b\eta_2\cdots\eta_k|m_1\psi\phi x\eta_1)\nonumber\\
&=\sum_{\eta_1\cdots\eta_k}r(\eta_1|\psi x m_1,m_2=1)p(\eta_2|m_1\psi\phi x\eta_1)p(b\eta_3\cdots\eta_k|m_1\psi\phi x\eta_1\eta_2)\nonumber\\
&=\sum_{\eta_1\cdots\eta_k}r(\eta_1|\psi x m_1,m_2=1)r(\eta_2|\psi x m_1,m_2=2)p(b\eta_3\cdots\eta_k|m_1\psi\phi x\eta_1\eta_2)\nonumber
\end{align}
Repeating the same procedure we get
\begin{align}
p(b|m_1\psi\phi x)&=\sum_{\eta_1\cdots\eta_k}\left\{\prod_{i=1}^kr(\eta_i|\psi x m_1,m_2=i)\right\}p(b|m_1\phi x\eta_1\cdots\eta_k)\label{3e}
\end{align}
\end{subequations}
Now we compute $p(b|m_1\phi x\eta_1\cdots\eta_k)$
\begin{subequations}
\begin{align}
p(b|m_1\phi x\eta_1\cdots\eta_k)&=\sum_{m_2}p(m_2b|m_1\phi x\eta_1\cdots\eta_k)\nonumber\\
&=\sum_{m_2}p(m_2|m_1\phi x\eta_1\cdots\eta_k)p(b|m_1m_2\phi x\eta_1\cdots\eta_k)\nonumber\\
&=\sum_{m_2}p(m_2|m_1\phi x)p(b|m_1m_2\phi x\eta_1\cdots\eta_k)\nonumber\\
&=\sum_{m_2}~\Tr[M^{m_2}_{m_1 x}\phi M^{m_2\dagger}_{m_1 x}]~p(b|m_1m_2\phi x\eta_1\cdots\eta_k)\nonumber\\
&=\sum_{m_2}~\Tr[M^{m_2}_{m_1 x}\phi M^{m_2\dagger}_{m_1 x}]~p(b|m_1m_2\phi x\eta_{m_2})\nonumber\\
&=\sum_{m_2}~\Tr[M^{m_2}_{m_1 x}\phi M^{m_2\dagger}_{m_1 x}]~\left\{\Tr\left[\pi^b_{m_1m_2\eta_{m_2} x}\left(\frac{M^{m_2}_{m_1 x}\phi M^{m_2\dagger}_{m_1 x}}{\Tr[M^{m_2}_{m_1 x}\phi M^{m_2\dagger}_{m_1 x}]}\right)\right]\right\}\nonumber\\
&=\sum_{m_2}\Tr\left[\pi^b_{m_1m_2\eta_{m_2} x}\left(M^{m_2}_{m_1 x}\phi M^{m_2\dagger}_{m_1 x}\right)\right]\label{4g}
\end{align}
\end{subequations}
Replacing $m_2$ by another dummy variable $m'_2$ in Eq.(\ref{4g}) and substituting $p(b|m_1\phi x\eta_1\cdots\eta_k)$ in Eq.(\ref{3e}) we get
\begin{subequations}
\begin{align}
p(b|m_1\psi\phi x)&=\sum_{\eta_1\cdots\eta_k}\left\{\prod_{i=1}^kr(\eta_i|\psi x m_1,m_2=i)\right\}\left(\sum_{m'_2}\Tr\left[\pi^b_{m_1m'_2\eta_{m'_2} x}\left(M^{m'_2}_{m_1 x}\phi M^{m'_2\dagger}_{m_1 x}\right)\right]\right)\nonumber\\
&=\sum_{m'_2}\sum_{\eta_1\cdots\eta_k}\left\{\prod_{i=1}^kr(\eta_i|\psi x m_1,m_2=i)\Tr\left[\pi^b_{m_1m'_2\eta_{m'_2} x}\left(M^{m'_2}_{m_1 x}\phi M^{m'_2\dagger}_{m_1 x}\right)\right]\right\}\nonumber\\
&=\sum_{m'_2}\sum_{\eta_1\cdots\eta_k}\left\{\left(\prod_{i\neq m'_2}r(\eta_i|\psi x m_1,m_2=i)\right)~r(\eta_{m'_2}|\psi x m_1,m_2=m'_2)\times\right.\nonumber\\
&\left.\hspace{4cm}\Tr\left[\pi^b_{m_1m'_2\eta_{m'_2} x}\left(M^{m'_2}_{m_1 x}\phi M^{m'_2\dagger}_{m_1 x}\right)\right]\right\}\nonumber\\
&=\sum_{m'_2}\sum_{\eta_{m'_2}}\left\{\left(\prod_{i\neq m'_2}\sum_{\eta_i}r(\eta_i|\psi x m_1,m_2=i)\right)~r(\eta_{m'_2}|\psi x m_1,m_2=m'_2)\times\right.\nonumber\\
&\left.\hspace{4cm}\Tr\left[\pi^b_{m_1m'_2\eta_{m'_2} x}\left(M^{m'_2}_{m_1 x}\phi M^{m'_2\dagger}_{m_1 x}\right)\right]\right\}.\nonumber
\end{align}
Since $\sum_{\eta_i}r(\eta_i|\psi x m_1,m_2=i)=1$  and $\prod_{i\neq m'_2}1=1$ we have
\begin{align}
p(b|m_1\psi\phi x)&=\sum_{m'_2\eta_{m'_2}}r(\eta_{m'_2}|\psi x m_1,m_2=m'_2)~\Tr\left[\pi^b_{m_1m'_2\eta_{m'_2} x}\left(M^{m'_2}_{m_1 x}\phi M^{m'_2\dagger}_{m_1 x}\right)\right]\nonumber\\
p(b|m_1\psi\phi x)&=\sum_{m_2m_3}r(m_3|\psi x m_1m_2)~\Tr\left[\pi^b_{m_1m_2m_3 x}\left(M^{m_2}_{m_1 x}\phi M^{m_2\dagger}_{m_1 x}\right)\right]\label{5f}
\end{align}
\end{subequations}
Where we replaced the dummy variables $m'_2$ and $\eta_{m'_2}$ by the variables $m_2$ and $m_3$ respectively. Now substituting $p(b|m_1\psi\phi x)$ from Eq.(\ref{5f}) in Eq.(\ref{2c}) we get,
\begin{align}
p(b|\psi\phi)
&=\sum_{m_1m_2m_3 x}\mu( x)q(m_1|\psi x)~r(m_3|\psi x m_1m_2)~\Tr\left[\pi^b_{m_1m_2m_3 x}\left(M^{m_2}_{m_1 x}\phi M^{m_2\dagger}_{m_1 x}\right)\right]    
\end{align}
Which exactly matches the expression for the $3$ way case. This proves that any 3 way bounded communication protocol can be simulated by a $1$ way bounded communication protocol. This also proves that we can also convert any bounded back and forth communication to only $1$ way communication.

\section{Proof of Observation \ref{obs1}}\label{appobs1}
\begin{proof}
Arguably, 1-bit of communication from Bob to Alice is not useful. We thus prove that 1-bit of communication from Alice to Bob is not sufficient to simulate the statistics. 

\noindent We start by recalling the $2\mapsto1$ RAC task \cite{Wiesner1983,Ambainis2002}, where Alice is provided with a random bit string \(x_0x_1\in\{0,1\}^2\) and Bob is randomly given \(y\in\{0,1\}\). Bob's aim to produce a 1-bit outcome \(b=x_y\) with the help of 1-bit respectively 1-qubit communication from Alice. Qubit strategies yield the optimal success \(P_Q=1/2(1+1/\sqrt{2})\) which is strictly higher than the optimal c-bit success \(P_C=1/2(1+1/2)\).

Contrary to the claim of the Observation $1$, let us assume that the statistics of the measurement \(\mathrm{M}^{(A)}_{twist}\equiv \{\mathrm{P}_{\hat{z}}\otimes\mathrm{P}_{\hat{z}}, \mathrm{P}_{\hat{z}^\perp}\otimes\mathrm{P}_{\hat{z}}, \mathrm{P}_{\hat{x}}\otimes\mathrm{P}_{\hat{z}^\perp}, \mathrm{P}_{\hat{x}^\perp}\otimes\mathrm{P}_{\hat{z}^\perp}\}\) on a state known to Alice and an unknown state of Bob system can be simulated at Bob's end with just 1-bit of classical communication from Alice to Bob. Let us denote this protocol as \(1\)-\(\mathrm{CBS}\). As we will argue now, this protocol can be utilized to perform the $2\mapsto1$ RAC task. Given the bit string Alice will implement the \(1\)-\(\mathrm{CBS}\) protocol on the preparation 
\begin{align}
\psi^{x_0x_1}_A=\frac{1}{2}\left[\mathbf{I}_2+\frac{1}{\sqrt{2}}\{(-1)^{x_0}\sigma_3+(-1)^{x_1}\sigma_1\}\right],
\end{align}
whereas Bob, given the question \(y\), will prepare the state 
\begin{align}
\phi^{y}_B=\frac{1}{2}\left[\mathbf{I}_2+(-1)^y\sigma_3\right]. 
\end{align}
As per the assumption, \(1\)-\(\mathrm{CBS}\) protocol reproduce the statistics of \(\mathrm{M}^{(A)}_{twist}\) on \(\psi^{x_0x_1}_A\otimes\phi^{y}_B\) at Bob's laboratory. Bob can post-process this outcome statistics and can accordingly devise a strategy to answer his guess \(b\). In particular, for the outcomes \(\mathrm{P}_{\hat{z}}\otimes\mathrm{P}_{\hat{z}}\) and \(\mathrm{P}_{\hat{x}}\otimes\mathrm{P}_{\hat{z}^\perp}\) Bob guesses \(b=0\), else he guesses \(b=1\). Denoting \(\Pi_0=\mathrm{P}_{\hat{z}}\otimes\mathrm{P}_{\hat{z}}+\mathrm{P}_{\hat{x}}\otimes\mathrm{P}_{\hat{z}^\perp}\) and \(\Pi_1=\mathrm{P}_{\hat{z}^\perp}\otimes\mathrm{P}_{\hat{z}}+\mathrm{P}_{\hat{x}^\perp}\otimes\mathrm{P}_{\hat{z}^\perp}\), we have 
\begin{align}
\text{Pr}(b=x_y|x_0x_1,y)&=\Tr\left[\left(\psi^{x_0x_1}_A\otimes\phi^{y}_B\right)\Pi_{b=x_y}\right]=\frac{1}{2}\left(1+\frac{1}{\sqrt{2}}\right),~\forall~x_ox_1~\wedge~y.
\end{align}
Therefore, with \(1\)-\(\mathrm{CBS}\) protocol one can have a success \(P_{1-\mathrm{CBS}}=1/2(1+1/\sqrt{2})\) in \(2\mapsto1\) RAC task -- a contradiction. In other words, this proves that with 1-bit communication the statistics of \(\mathrm{M}^{(A)}_{twist}\) cannot be reproduced at Bob's laboratory.   
\end{proof}

\section{Proof of Theorem \ref{theo3}}\label{apptheo3}
\begin{proof}
It is known that any product von Neumann measurement in \(\mathbb{C}^2\otimes\mathbb{C}^d\) are implementable under LOCC \cite{Bennett1999}. Proof of our theorem follows a similar reasoning as of there. A generic orthonormal product Basis (OPB) of \(\mathbb{C}^2\otimes\mathbb{C}^d\) takes the form $\mathbf{B}=\cup_i\mathbf{B}_i$, with 
\begin{align}
\mathbf{B}_i:=\{\ket{\alpha_i}\otimes\ket{\beta_{ij}}, \ket{\alpha^\perp_i}\otimes\ket{\Tilde{\beta}_{ij}}\},
\end{align}
where, \(\langle\beta_{ij}|\beta_{i'j'}\rangle=\langle\Tilde{\beta}_{ij}|\Tilde{\beta}_{i'j'}\rangle=\delta_{ii'}\delta_{jj'}\) and for \(i\neq i'\), \(\langle\beta_{ij}|\Tilde{\beta}_{i'j'}\rangle=\delta_{jj'}\). Notably, the subspaces \(\mathcal{S}_i=\text{Span}\{\ket{\beta}_{ij}, \ket{\Tilde{\beta}_{ij}}\}_j\) at Bob's part are mutually orthogonal. To simulate the statistic of von Neumann measurement on the basis \(\mathbf{B}\) they apply the following protocol: (i) Bob performs a measurement distinguishing the subspaces \(\mathcal{S}_i\)'s, while Alice performs measurements \(\mathrm{M}_i\equiv\{\mathrm{P}_{\alpha_i}, \mathrm{P}_{\alpha^\perp_i}\}\) on different copies of her known state, and through different classical channels she communicates \(0_i(1_i)\) whenever the projector \(\mathrm{P}_{\alpha_i}(\mathrm{P}^\perp_{\alpha_i})\) clicks; (ii) Bob considers the communication from $i^{th}$ channel if his projector corresponding to \(\mathcal{S}_i\) subspace clicks, and then he performs a measurement that distinguishes the states \(\{\ket{\beta}_{ij}\}\) if Alice's communication is $0_i$, otherwise he performs a measurement that distinguishes the states \(\{\ket{\Tilde{\beta}}_{ij}\}\).  This completes the proof. 
\end{proof}

\subsection{An explicit Example}\label{apptheo3ex}
\noindent For a better appreciation of Theorem $3$, here we provide an explicit example. Consider the OPB \(\mathbf{B}=\mathbf{B}_1\cup\mathbf{B}_2\cup\mathbf{B}_3\) of \(\mathbb{C}^2\otimes\mathbb{C}^6\) system, where 
%\begin{subequations}
\begin{align}
\left\{\begin{aligned}
\mathbf{B}_1&=\left\{\ket{0}\ket{0},\ket{0}\ket{1},\ket{1}\ket{x^{01}_+},\ket{1}\ket{x^{01}_-}\right\},\\
\mathbf{B}_2&=\left\{\ket{x^{01}_+}\ket{2},\ket{x^{01}_+}\ket{3},\ket{x^{01}_-}\ket{x^{23}_+},\ket{x^{01}_-}\ket{x^{23}_-}\right\},\\
\mathbf{B}_2&=\left\{\ket{y^{01}_+}\ket{4},\ket{y^{01}_+}\ket{5},\ket{y^{01}_-}\ket{x^{45}_+},\ket{y^{01}_-}\ket{x^{45}_-}\right\}
\end{aligned}\right\},
\end{align}   
%\end{subequations}
with, \(\ket{x^{lm}_{\pm}}:=\frac{1}{\sqrt{2}}(\ket{l}\pm\ket{m})\) and \(\ket{y^{lm}_{\pm}}:=\frac{1}{\sqrt{2}}(\ket{l}\pm\iota\ket{m})\). To simulate statistics of the measurement on this basis, Bob, on his unknown state, first performs a measurement $\mathrm{M}^B_{1^{st}}$ consisting of three rank-2 projective effects, i.e.
\begin{align}
\mathrm{M}^B_{1^{st}}\equiv\left\{\!\begin{aligned}
\mathbb{P}_1&:=\ket{0}\bra{0}+\ket{1}\bra{1},~~\mathbb{P}_2:=\ket{2}\bra{2}+\ket{3}\bra{3},~~
\mathbb{P}_3:=\ket{4}\bra{4}+\ket{5}\bra{5}\\
\end{aligned}\right\}.
\end{align}
On the other hand, Alice performs $\sigma_z$, $\sigma_x$, and $\sigma_y$ measurements on three copies of her known state, and communicates the outcomes through three 1-bit classical channels, respectively $1^{st}$, $2^{nd}$, and $3^{rd}$, to Bob. Depending on which projector clicks in his first measurement, Bob chooses the corresponding communication line from Alice, and depending on the communication received from Alice, he performs the measurements as shown is Table \ref{tab1}.
\begin{table}[h!]
\begin{center}
\begin{tabular}{|c|c|c|c|c}
\cline{1-4}
Outcome of \(\mathrm{M}^B_{1st}\)& Channel& Commun.& Bob's measurement &  \\ \cline{1-4}
\multirow{2}{*}{\(\mathbb{P}_1\)} & \multirow{2}{*}{\(1^{st}\) } & \(0\) & \(\{\ket{0}\bra{0},\ket{1}\bra{1}\}\) &  \\ \cline{3-4}
                  &                   & \(1\) & \(\left\{\ket{x^{01}_+}\bra{x^{01}_+},\ket{x^{01}_-}\bra{x^{01}_-}\right\}\) &  \\ \cline{1-4}
\multirow{2}{*}{\(\mathbb{P}_2\)} & \multirow{2}{*}{\(2^{nd}\) } & \(0\) & \(\{\ket{2}\bra{2},\ket{3}\bra{3}\}\) &  \\ \cline{3-4}
                  &                   & \(1\) & \(\{\ket{x^{23}_+}\bra{x^{23}_+},\ket{x^{23}_-}\bra{x^{23}_-}\}\) &  \\ \cline{1-4}
\multirow{2}{*}{\(\mathbb{P}_3\)} & \multirow{2}{*}{\(3^{rd}\) } & \(0\) & \(\{\ket{4}\bra{4},\ket{5}\bra{5}\}\) &  \\ \cline{3-4}
                  &                   & \(1\) & \(\{\ket{x^{45}_+}\bra{x^{45}_+},\ket{x^{45}_-}\bra{x^{45}_-}\}\) &  \\ \cline{1-4}
\end{tabular}
\caption{Bob selects the \(i^{th}\) communication line if rank-2 projector \(\mathbb{P}_i\) clicks in his first measurement. Then based on the communication received from Alice through the respective classical channel, he chooses his final measurement.}\label{tab1}
\end{center}
\end{table}
This protocol exactly reproduces the measurement statistics at Bob's end, while utilizing three classical bits from Alice.

\section{Proof of Theorem \ref{theo4}}\label{apptheo4}
\noindent We start by recalling a definition from \cite{Watrous2018} (see Section {\bf 2.3.3} in page {\bf 113}). 
\begin{definition}\label{def1}
[Rank-1 extremal POVM] A $k$ outcome POVM \(\mathrm{M}\equiv\{\Pi_a\}_{a=1}^k\) is called rank-1 extremal POVM if for all \(a,~\Pi_a=p_a\mathrm{P}_a\), with \(p_a\geq0~\&~\mathrm{P}_a\) being a rank-1 projector, and \(\sum_ar_a\mathrm{P}_a=0\) implies \(r_ap_a=0~\forall~a\); or equivalently all nonzero elements in \(\{\Pi_a\}_{a=1}^k\) are linearly independent of each other. Let, \(\mathcal{M}^{ext}_{R1}\) denotes the set of all rank-1 extremal POVMs.
\end{definition}
\noindent The notion of rank-1 extremal POVMs leads us to the following useful Lemma. 
\begin{Lemma}\label{lemma2}
Any finite element rank-1 POVM \(\mathrm{M}_{R1}\equiv\{s_a\mathrm{P}_a\}_{a=1}^k\) can be written as probabilistic mixture of finite number of rank-1 extremal POVMs, i.e., \(\forall~a,~s_a\mathrm{P}_a=\sum_{\lambda=1}^{L<\infty}\mu_{\lambda} s_a^{\lambda}\mathrm{P}_a\), with \(\sum_{\lambda=1}^{L}\mu_{\lambda}=1\) and \(\forall~\lambda,~\mathrm{M}^{\lambda}\equiv\{s_a^{\lambda}\mathrm{P}_a\}_{a=1}^k\in\mathcal{M}^{ext}_{R1}\).
\end{Lemma}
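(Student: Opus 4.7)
The plan is to recast the lemma as a statement about a convex polytope, identify its vertices with rank-1 extremal POVMs, and then invoke the Minkowski--Carath\'eodory theorem for polytopes.

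First I would fix the rank-1 projectors $\{\mathrm{P}_a\}_{a=1}^k$ appearing in $\mathrm{M}_{R1}$ and consider the set of admissible weights
\[
\mathcal{K} := \Big\{\vec{t}=(t_1,\ldots,t_k)\in\mathbb{R}^k_{\geq 0} ~:~ \sum_{a=1}^k t_a \mathrm{P}_a = \mathbf{I}\Big\}.
\]
Viewing Hermitian operators as vectors in $\mathbb{R}^{d^2}$, the constraint is linear in $\vec{t}$, so $\mathcal{K}$ is the intersection of the nonnegative orthant with an affine subspace of $\mathbb{R}^k$. Taking the trace of the constraint against $\mathrm{P}_a$ gives $\sum_b t_b\,\mathrm{Tr}[\mathrm{P}_a\mathrm{P}_b]=1$; since every summand is nonnegative and $\mathrm{Tr}[\mathrm{P}_a^2]=1$, we get $t_a\leq 1$, so $\mathcal{K}$ is a compact convex polytope in $\mathbb{R}^k$. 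The tuple $\vec{s}=(s_1,\ldots,s_k)$ encoding $\mathrm{M}_{R1}$ is a point of $\mathcal{K}$.

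Next I would invoke the basic-feasible-solution characterization of vertices for polyhedra of the form $\{\vec t : A\vec t=\vec b,\,\vec t\geq 0\}$: a point $\vec{t}\in\mathcal{K}$ is a vertex if and only if the operators $\{\mathrm{P}_a : t_a>0\}$ are linearly independent in the real vector space of $d\times d$ Hermitian matrices. Comparing with Definition \ref{def1}, this is exactly the condition that the POVM $\{t_a\mathrm{P}_a\}_{a=1}^k$ be rank-1 extremal, since ``$\sum_a r_a\mathrm{P}_a=0 \Rightarrow r_a t_a=0~\forall a$'' is the same as linear independence of the $\mathrm{P}_a$'s in the support of $\vec t$. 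In particular $\mathcal{K}$ has only finitely many vertices (at most $\binom{k}{d^2}$), and each of them, once the projectors are re-attached, lies in $\mathcal{M}^{ext}_{R1}$. Applying Minkowski--Carath\'eodory, $\vec{s}$ admits a finite decomposition $\vec{s}=\sum_{\lambda=1}^L \mu_\lambda \vec{s}^{\,\lambda}$ with $L<\infty$, $\mu_\lambda\geq 0$, $\sum_\lambda\mu_\lambda=1$, and each $\vec{s}^{\,\lambda}$ a vertex of $\mathcal{K}$. Reading this componentwise and multiplying by $\mathrm{P}_a$ yields $s_a\mathrm{P}_a=\sum_\lambda \mu_\lambda s_a^\lambda\mathrm{P}_a$ with $\mathrm{M}^\lambda\equiv\{s_a^\lambda\mathrm{P}_a\}_{a=1}^k\in\mathcal{M}^{ext}_{R1}$, which is exactly the claim.

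The step that requires most care is the vertex characterization. For the forward direction one argues contrapositively: if the supporting projectors $\{\mathrm{P}_a : t_a>0\}$ are linearly dependent, pick a nonzero real vector $\vec{r}$ supported on the same indices with $\sum_a r_a\mathrm{P}_a=0$; then $\vec{t}\pm\varepsilon\vec{r}\in\mathcal{K}$ for sufficiently small $\varepsilon>0$, so $\vec{t}$ is not extremal. The converse is lighter: if the supporting projectors are linearly independent, the single linear relation $\sum_a t_a\mathrm{P}_a=\mathbf{I}$ pins down the nonzero weights uniquely given the support, ruling out any nontrivial convex combination representing $\vec{t}$. Together these pin down the vertex set as precisely the rank-1 extremal POVMs supported within $\{\mathrm{P}_a\}_{a=1}^k$, closing the argument.
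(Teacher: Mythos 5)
Your proof is correct and takes a genuinely different route from the paper. The paper begins by invoking the (Krein--Milman-type) fact that the rank-1 POVM $\mathrm{M}_{R1}$ admits \emph{some} decomposition $s_a\mathrm{P}_a=\int_\lambda d\lambda\,\mu_\lambda s_a^\lambda\mathrm{P}_a^\lambda$ into rank-1 extremal POVMs, then argues in two steps that this decomposition is effectively finite: first, that rank-1-ness of $\mathrm{P}_a$ forces $\mathrm{P}_a^\lambda=\mathrm{P}_a$ for all contributing $\lambda$; second, that the extremality condition plus the normalization $\sum_{a\in\mathcal{A}_\lambda}s_a^\lambda\mathrm{P}_a=\mathbf{I}$ uniquely fixes the weights once the support $\mathcal{A}_\lambda\subseteq\{1,\ldots,k\}$ is chosen, so there are only finitely many admissible extremal POVMs. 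Your argument instead works entirely in the weight polytope $\mathcal{K}=\{\vec t\geq 0: \sum_a t_a\mathrm{P}_a=\mathbf{I}\}$: you establish boundedness via $t_a\leq 1$, identify vertices with the basic-feasible-solution condition (linear independence of the supporting $\mathrm{P}_a$'s, which is exactly Definition \ref{def1}), and conclude by Minkowski--Carath\'eodory. The tradeoff: the paper's proof is shorter if one is willing to take the initial (possibly continuous) extremal decomposition for granted, whereas your construction is more self-contained and makes the finiteness automatic from the start. Your approach also sidesteps the paper's separate step of showing that the projectors of the extremal constituents must coincide with $\{\mathrm{P}_a\}$, since fixing the projectors is built into the definition of $\mathcal{K}$. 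One small caveat worth a sentence in a full write-up: vertices of $\mathcal{K}$ are a priori extremal only within $\mathcal{K}$, and you implicitly use that Definition \ref{def1} is stated directly as a linear-independence condition rather than as global extremality in the full POVM set; since those notions are known to coincide for rank-1 POVMs, and since the paper's own definition is the linear-independence one, this is fine here but deserves flagging.
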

\begin{proof}
Consider an arbitrary rank-1 POVM with finite outcomes \(\mathrm{M}_{R1}\equiv\{s_a\mathrm{P}_a\}_{a=1}^k\), with \(s_a\geq0\). According to Definition \ref{def1}, \(\mathrm{M}_{R1}\) allows convex decomposition in terms of \(\mathrm{M}^{\lambda}\equiv\{{s_a^{\lambda}\mathrm{P}_a^{\lambda}}\}_{a=1}^{k}\in\mathcal{M}^{ext}_{RI}\), i.e.
\begin{align}
s_a\mathrm{P}_a=\int_{\lambda}d\lambda \mu_{\lambda}s_a^{\lambda}\mathrm{P}_a^{\lambda};~~\mu_{\lambda}>0~~\&~\int_{\lambda}d\lambda \mu_{\lambda}=1.\label{int}
\end{align}
\(\mathrm{P}_a\) being a rank-1 projector it follows that \(\mathrm{P}_a^{\lambda}=\mathrm{P}_a\), whenever \(s_a^{\lambda}>0\). On the other hand, for \(s_a^{\lambda}=0\) also we can assume \(\mathrm{P}_a^{\lambda}=\mathrm{P}_a\), which thus implies \(\mathrm{P}_a^{\lambda}=\mathrm{P}_a,~\forall~a,\lambda\). Thus we have \(\mathrm{M}^{\lambda}\equiv\{s_a^{\lambda}\mathrm{P}_a\}_{a=1}^k\). For such an \(\mathrm{M}^{\lambda}\) we can define \(\mathcal{A}_{\lambda}:=\{a~|~s_a^{\lambda}>0\}\). The extremality of $M^{\lambda}$ implies the set of effects \(\{s_a^{\lambda}\mathrm{P}_a~|~a\in \mathcal{A}_{\lambda}\}\) to be linearly independent, and furthermore the condition $\sum_{a\in \mathcal{A}_{\lambda}}s_a^{\lambda}\mathrm{P}_a=\mathbf{I}$ uniquely specifies the values of $s_a^{\lambda}$'s for any $\mathcal{A}_{\lambda}$. As the set $\{\mathrm{P}_a\}_{a=1}^k$ contains finitely many projectors, there are only finitely many ways of choosing $\mathcal{A}_{\lambda}$ such that \(M^{\lambda}\) turns out to be a rank-1 extremal POVM. Therefore, the integral in Eq.(\ref{int}) gets replaced by finite summation, meaning 
\begin{align}
s_a\mathrm{P}_a=\sum_{\lambda=1}^L\mu_{\lambda} s_a^{\lambda}\mathrm{P}_a,~\text{with}~\mu_{\lambda}>0~\&~\sum_{\lambda=1}^L\mu_{\lambda}=1.
\end{align}
This completes the proof.
\end{proof}
\noindent With this, we now proceed to prove Theorem \ref{theo4}.
\begin{proof}
Since any separable POVM is coarse-graining of rank-1 product POVMs, it suffices to prove our claim for the later only. Consider an $K$ outcomes rank-1 product POVM  
\begin{align}
 \mathrm{M}\equiv\left\{p_i\mathrm{P}_{u_i}\otimes\mathrm{P}_{v_i}~|~\ket{u_i}\in\mathbb{C}^{d_1}_A,~\ket{v_i}\in\mathbb{C}^{d_2}_B\right\}_{i=1}^K.
\end{align}
Given a known state \(\ket{\psi}\in\mathbb{C}^{d_1}\) to Alice and an unknown state \(\ket{\phi}\in\mathbb{C}^{d_2}\) to Bob, they aim to reproduce the outcome statistics 
\begin{align}
p(i|\psi,\phi):=p_i\Tr[\mathrm{P}_{u_i}\mathrm{P}_\psi]\Tr[\mathrm{P}_{v_i}\mathrm{P}_\phi]\label{sim}
\end{align}
at Bob's laboratory. Denoting \(p(i,\psi)=p_i\Tr[\mathrm{P}_{u_i}\mathrm{P}_\psi]\), the statistics in Eq.(\ref{sim}) can be viewed as the outcome statistics of the the effective rank-1 POVM \(\mathrm{M}^\psi_{R1}:=\{p(i,\psi)\mathrm{P}_{v_i}\}_{i=1}^K\) on Bob's unknown state $\ket{\phi}$. Lemma \ref{lemma2} ensures that POVM \(\mathrm{M}^\psi_{R1}:=\{p(i,\psi)\mathrm{P}_{v_i}\}_{i=1}^K\) can be expressed as probabilistic mixture of finite number of rank-1 extremal POVMs \(\mathrm{M}^\lambda\equiv\{s^\lambda_i\mathrm{P}_{v_i}\}_{i=1}^K\), i.e.
\begin{align}
p(i,\psi)\mathrm{P}_{v_i}=\sum_{\lambda=1}^L\mu_{\lambda}(\psi)s_i^{\lambda}\mathrm{P}_{v_i}.\label{mix}
\end{align}
More specifically Eq.(\ref{mix}) depicts that the coefficients \(\{\mu_{\lambda}(\psi)\}_{\lambda=1}^L\) in convex mixture depend on the state of Alice's system. To simulate the statistics of Eq.(\ref{sim}) at Bob's end, Alice  given a known state $\psi$ generates a random variable \(\lambda\in\{1,2,\cdots,L\}\) according to probability distribution $\{\mu_{\lambda}(\psi)\}_{\lambda=1}^L$ and communicates it to Bob using \(\log L\)-bits of classical communication. Upon receiving the random variable \(\lambda\) Bob implements the corresponding rank-1 extremal POVM \(\mathrm{M}^{\lambda}\) on his unknown state \(\phi\). This completes the protocol.  
\end{proof}

\subsection{Classical Simulation of twisted-Butterfly POVM}\label{appbutterfly}
The twisted-Butterfly POVM \(\mathrm{M}_{tb}\) induces the following effective POVM on Bob's part:
\begin{align}
\mathrm{M}_{tb}^e\equiv\left\{\!\begin{aligned}
\Pi^e_1&:=\frac{1}{2}
(1+\psi_z)\mathrm{P}_{\hat{z}^\perp},~~
\Pi^e_{21}:=\frac{3}{8}
(1-\frac{2\sqrt{2}}{3}\psi_x+\frac{1}{3}\psi_z)\mathrm{P}_{\hat{z}},\\
\Pi^e_{22}&:=\frac{3}{8},
(1-\psi_z)\mathrm{P}_{\hat{\alpha}},~~
\Pi^e_{31}:=\frac{3}{8}
(1-\frac{2\sqrt{2}}{3}\psi_x+\frac{1}{3}\psi_z)\mathrm{P}_{\hat{z}},\\
&\hspace{2cm}\Pi^e_{32}:=\frac{3}{8}
(1-\psi_z)\mathrm{P}_{\hat{\beta}^\perp}
\end{aligned}\right\},  
\end{align}
where \(\psi_z\) denotes the \(z\) component of the Bloch vector of of Alice's known state \(\ket{\psi}\). Using the four projectors \(\left\{\mathrm{P}_{\hat{z}},\mathrm{P}_{\hat{z}^\perp},\mathrm{P}_{\hat{\alpha}},\mathrm{P}_{\hat{\beta}^\perp}\right\}\) one can obtain only four rank-1 extremal POVMs, namely
\begin{align}
\left\{\begin{aligned}
\mathrm{M}^1&:=\left\{\mathrm{P}_{\hat{z}^\perp},\mathrm{P}_{\hat{z}},0,0,0\right\},~~\hspace{1.3cm}
\mathrm{M}^2:=\left\{\mathrm{P}_{\hat{z}^\perp},0,0,\mathrm{P}_{\hat{z}},0\right\},\\
\mathrm{M}^3&:=\left\{0,\frac{1}{2}\mathrm{P}_{\hat{z}},\frac{3}{4}\mathrm{P}_{\hat{\alpha}},0,\frac{3}{4}\mathrm{P}_{\hat{\beta}^\perp}\right\},~~
\mathrm{M}^4:=\left\{0,0,\frac{3}{4}\mathrm{P}_{\hat{\alpha}},\frac{1}{2}\mathrm{P}_{\hat{z}},\frac{3}{4}\mathrm{P}_{\hat{\beta}^\perp}\right\}
\end{aligned}\right\}, 
\end{align} 
The POVM \(\mathrm{M}_{tb}^e\) allows a convex decomposition in terms of extremal POVMs \(\{\mathrm{M}^\lambda\}_{\lambda=1}^4\), i.e.
\begin{align}
\mathrm{M}_{tb}^e&=\sum_{i=1}^4\mu_i(\psi) \mathrm{M}^i,
\end{align}
where,
\begin{align}
\left.\begin{aligned}
\mu_{1}(\psi)&=\max\{0,\frac{1}{8}(1-2\sqrt{2}\psi_x+3\psi_z)\},~~~~~
\mu_{2}(\psi)=\frac{1}{2}(1+\psi_z)-\mu_{1}(\psi),\\
\mu_{3}(\psi)&=\frac{3}{4}
(1-\frac{2\sqrt{2}}{3}\psi_x+\frac{1}{3}\psi_z)-2\mu_{1}(\psi),~~
\mu_{4}(\psi)=\frac{3}{4}
(1+\frac{2\sqrt{2}}{3}\psi_x+\frac{1}{3}\psi_z)-2\mu_{2}(\psi)
\end{aligned}\right\}.
\end{align}
It is easy to verify that  $\{\mu_{\lambda}(\psi)\}_{\lambda=1}^4$ is indeed a probability distribution $\forall~\psi$. To simulate statistics of twisted-butterfly POVM, Alice after receiving classical description of the state $\psi$ communicates a four valued random variable \(\{\lambda\}_{\lambda=1}^4\) sampled according to a distribution $\{\mu_{\lambda}(\psi)\}_{\lambda=1}^4$, and then Bob accordingly performs the measurement  \(\mathrm{M}^{\lambda}\) on his unknown state $\phi$. Thus 2 bits of communication channel is required from Alice to Bob to implement the classical simulation protocol. 

\begin{remark}\label{rmk1}
While the measurement \(\mathrm{M}^{(A)}_{twist}\) is LOCC-implementable, the measurement \(\mathrm{M}_{tb}\) is not implementable via LOCC (Lemma $1$). However, simulation of the outcome statistics on Bob's end for both the measurements is not possible with 1 bit classical communication from Alice, but possible with 2 bits of communication.
\end{remark}

\section{Simulation of multipartite fully separable measurements}\label{appmultisep}
\noindent We start by introducing a natural multipartite generalization of the channel simulation task, that invokes more than one senders (say) Alice-$1$, Alice-$2\cdots$ Alice-$n$ and one receiver Bob. Formally the task is defined as follows:
\begin{itemize}
\item Each of the senders is given classical description of qubit state, i.e., Alice-$i$ receives classical description of the state $\psi_{A_i}\in\mathbb{C}^{d_i}_{A_i}$. Importantly, the knowledge of the state is known only to the $i^{th}$ Alice, while it is oblivious to all other senders and the receiver.
\item Bob receives an unknown quantum state $\phi_B\in\mathbb{C}^{d}_B$. Likewise the one sender-one receiver case, here also the state $\phi_B$ is unknown to the senders.
\item Bob aims to reproduce statistics of a $K$ outcome multipartite measurement $\mathrm{M}_{A_1\cdots A_nB}\equiv\left\{\Pi^b_{A_1\cdots A_nB}\right\}_{b=1}^K$ on the state $\bigotimes_{i=1}^n\psi_{A_i}\otimes\phi_B$, which reads as
\begin{align}
p\left(b|\bigotimes_{i=1}^n\psi_{A_i}\otimes\phi_B\right)=\Tr\left[\left(\bigotimes_{i=1}^n\psi_{A_i}\otimes\phi_B\right)\Pi^b_{A_1\cdots A_nB}\right].\label{multi}
\end{align}
\end{itemize}
Naturally, this raises the question of whether \textbf{Theorem 4} can be generalized to fully separable measurements composed exclusively of fully separable effects \cite{Horodecki2009}. Before proceeding further, it is important to note that in the multipartite setting, such measurements can exhibit nonlocal behavior in the sense that they may not be implementable within the operational paradigm of LOCC. A canonical example is the three-qubit Shift basis measurement \cite{Bennett1999(1)}:
\begin{align}
\mathrm{M}_{Shift}\equiv \left\{\!\begin{aligned} \ket{000}\bra{000},~~~~~\ket{111}\bra{111},~~~~~\ket{+01}\bra{+01},~\ket{-01}\bra{-01},\\
\ket{01+}\bra{01+},~\ket{01-}\bra{01-},~\ket{1+0}\bra{1+0},~\ket{1-0}\bra{1-0}
\end{aligned}\right\},
\end{align} 
where $\ket{\pm}:=\frac{1}{\sqrt{2}}(\ket{0}\pm\ket{1}$. Notably, Bennett et al. \cite{Bennett1999(1)} demonstrated that the shift-basis measurement cannot be implemented via LOCC when all parties are spatially separated, though it becomes feasible if any two parties are in one laboratory. This construction was later extended to multipartite and higher-dimensional systems \cite{Niset2006}. More recently, fully separable measurements have been identified that remain non-implementable under LOCC unless all parties are co-located, revealing a stronger form of measurement nonlocality \cite{Halder2019,Rout2019,Rout2021}. This motivates an investigation into the applicability of Theorem 4 to such fully separable measurements. \\\\
\noindent While dealing with classical simulation of the statistics in Eq.(\ref{multi}), the following two configurations arise depending on how the classical resources are allowed:
\begin{itemize}
\item \textbf{Configuration A:} All the parties (the senders and the receiver) can exchange arbitrarily large but finite amount of classical communication among one another.
\item \textbf{Configuration B:} Arbitrarily large but finite amount one-way classical communicate is allowed from each of the senders to the receiver. Back ward communication from the receiver to the senders as well as communication among the senders are not allowed. 
\end{itemize}
Within the {\bf Configuration A}, in the following we first show that it is possible to generalize the {\bf Theorem 4}.
\begin{manualtheorem}{4S-A}\label{theo4SA}
Statistics of any fully separable measurement on a quantum states $\psi_{A_i}$, which is known to $i^{th}$ Alice but unknown to others, and an unknown state provided to Bob, can always be simulated at Bob's end by finite classical communication allowed within Configuration A. 
\end{manualtheorem}
\begin{proof}
The proof proceeds by induction. Assuming the result holds for $n-1$ senders, we show it must also hold for $n$. As in the bipartite case, it suffices to consider separable measurements performed by Bob, specifically those composed of rank-1 POVM elements. Let us consider a $K$-outcome, rank-1, fully product POVM of the form
\begin{align}
\mathrm{M}\equiv\left\{p_b\bigotimes_{i=1}^n\mathrm{P}_{u^i_b}\otimes\mathrm{P}_{v_b}~|~\ket{u^i_b}\in\mathbb{C}^{d_i}_{A_i},~\ket{v_b}\in\mathbb{C}^{d}_B\right\}_{b=1}^K.
\end{align}
The outcome statistics of Eq.(\ref{multi}) reads as
\begin{align}
p\left(b|\bigotimes_{i=1}^n\psi_{A_i}\otimes\phi_B\right)&=\Tr\left[\left(\bigotimes_{i=1}^n\psi_{A_i}\otimes\phi_B\right)\left(p_b\bigotimes_{i=1}^n\mathrm{P}_{u^i_b}\otimes \mathrm{P}_{v_b}\right)\right]\nonumber\\
&=p_b\Tr\left[\psi_{A_1}\mathrm{P}_{u^1_b}\right]\times\Tr\left[\left(\bigotimes_{i=2}^n\psi_{A_i}\otimes\phi_B\right)\left(\bigotimes_{i=2}^n\mathrm{P}_{u^i_b}\otimes \mathrm{P}_{v_b}\right)\right].
\end{align}
Denoting \(p(b,\psi_{A_1})=p_b\Tr[\psi_{A_1}\mathrm{P}_{u^1_b}]\), the statistics in the above equation can be viewed as an effective rank-1 POVM separable measurement \(\mathrm{M}^{\psi_{A_1}}_{R1}:=\{p(b,\psi_{A_1})\bigotimes_{i=2}^n\mathrm{P}_{u^i_b}\otimes \mathrm{P}_{v_b}\}_{b=1}^K\) on the state $\bigotimes_{i=2}^n\psi_{A_i}\otimes\phi_B$. {\bf Lemma 2} (in main manuscript) ensures that POVM \(\mathrm{M}^{\psi_{A_1}}_{R1}\) can be expressed as probabilistic mixture of finite number of rank-1 extremal POVMs \(\mathrm{M}^\lambda\equiv\{s^\lambda_b\bigotimes_{i=2}^n\mathrm{P}_{u^i_b}\otimes \mathrm{P}_{v_b}\}_{b=1}^K\), i.e.
\begin{align}
p(b,\psi_{A_1})\bigotimes_{i=2}^n\mathrm{P}_{u^i_b}\otimes \mathrm{P}_{v_b}=\sum_{\lambda=1}^L\mu_{\lambda}(\psi_{A_1})s_b^{\lambda}\bigotimes_{i=2}^n\mathrm{P}_{u^i_b}\otimes \mathrm{P}_{v_b}.
\end{align}
To classically simulate the required statistics, Alice-$1$ tosses a coin \(\{\mu_{\lambda}(\psi_{A_1})\}_{\lambda=1}^L\) and communicates the outcome $\lambda$ to the rest of the parties indicating that the rest of the parties should implement the measurement \(\mathrm{M}^\lambda\equiv\{s^\lambda_b\bigotimes_{i=2}^n\mathrm{P}_{u^i_b}\otimes \mathrm{P}_{v_b}\}_{b=1}^K\) on their joint state $\bigotimes_{i=2}^n\psi_{A_i}\otimes\phi_B$. Thus for each $\lambda$ we now have a similar separable measurement problem among the $n-1$ senders and one receiver. According to our inductive hypothesis every such measurement $\mathrm{M}^\lambda$ can be implemented by finite communication among the $n-1$ Alice's and Bob. Thus the measurement  $\mathrm{M}^{\psi_{A_1}}_{R1}$ can also be simulated among the $n$ senders and one receiver. The base case of the inductive proof for one sender and and one receiver follows from the \textbf{Theorem 4}. This completes the proof.
\end{proof}
At this point, one might suspect that under restricted communication scenario (i.e., \textbf{Configuration B}) Theorem \ref{theo4SA} may no longer hold. As we argue now this is not the case. 
\begin{manualtheorem}{4S-B}\label{theo4SB}
Statistics of any fully separable measurement on a quantum states $\psi_{A_i}$, which is known to $i^{th}$ Alice but unknown to others, and an unknown state provided to Bob, can always be simulated at Bob's end by finite one-way classical communication from the senders to the receiver as allowed within Configuration B. 
\end{manualtheorem}
\begin{proof}
We detail the argument for tripartite case, and the generalization follows for higher number of senders. Given two senders Alice-$1$ and Alice-$2$ and a receiver Bob let us consider a $K$-outcome, rank-1, fully product POVM of the form
\begin{align}
\mathrm{M}\equiv\left\{p_b\mathrm{P}_{u^1_b}\otimes\mathrm{P}_{u^2_b}\otimes\mathrm{P}_{v_b}~|~\ket{u^1_b}\in\mathbb{C}^{d_1}_{A_1},\ket{u^2_b}\in\mathbb{C}^{d_2}_{A_2},~\ket{v_b}\in\mathbb{C}^{d}_B\right\}_{b=1}^K.
\end{align}
\begin{itemize}
\item[] As before, the effective POVM \(\mathrm{M}^{\psi_{A_1}}_{R1}:=\{p(b,\psi_{A_1})\mathrm{P}_{u^2_b}\otimes \mathrm{P}_{v_b}\}_{b=1}^K\) on the state $\psi_{A_2}\otimes\phi_B$ can be written as a probabilistic mixture of finite number of rank-1 extremal POVMs \(\mathrm{M}^\lambda\equiv\{s^\lambda_b\mathrm{P}_{u^2_b}\otimes \mathrm{P}_{v_b}\}_{b=1}^K\).
\item[] In the case of {\bf Configuration A}, Alice-$1$ communicates the information of $\lambda$ to Alice-$2$ as well as Bob indicating them to simulate the statistics of \(\mathrm{M}^\lambda\). Whenever no communication between the senders is allowed, could just send the information corresponding to every possible value of $\lambda\in\{1\cdots L\}$ since she has the complete classical description of the state $\psi_{A_2}$. Therefore, Alice-$2$ does not require the knowledge of the measurement $\mathrm{M}^\lambda$ to be simulated.
\item[ ] Therefore, the protocol in Theorem \ref{theo4SA} can be modified by limiting Alice-$1$'s communication of $\lambda$ only to Bob. 
\item[] Since Bob receives information for every possible value of $\lambda$ from Alice-$2$, he can suitably choose the relevant information as indicated to him by Alice-$1$. 
\end{itemize}
Following the aforementioned protocol, the tripartite statistics can be achieved in the case where the senders do not communicate between them. However, this comes at the cost of a large (but finite) amount communication from Alice-$2$ to Bob. It is not hard to see that the above argument generalizes for more than two senders. In this case, $i^{th}$ Alice communicates all relevant information to Bob corresponding to Alice-$1,\cdots, $ Alice-$(i-1)$.
\end{proof}

\section{Detailed proof of Theorem \ref{theo5}}\label{apptheo5}
Given the state \(\psi := \frac{1}{2}(\mathbf{I}_2 + \hat{\psi} \cdot \sigma)\), the protocol ensures that the state \(\mathcal{X} \mathrm{P}_{\hat{\omega}_{i^\star}} \mathcal{X}^\dagger\), prepared at Bob's end, lies within the cone forming an apex angle \(\theta_m\) with the vector \(\hat{\psi}\) (see Fig.\ref{fig1s}). Furthermore, since the random variable \(\mathcal{X}\) is drawn Haar-randomly from the set of unitaries acting on \(\mathbb{C}^2\), all the states within this cone are prepared with equal probability. Consequently, on average, Bob prepares a resulting density operator \(\rho_{R}(\psi) = \frac{1}{2}(\mathbf{I}_2 + \vec{\psi}^R \cdot \sigma)\), with \(\vec{\psi}^R = (\psi^R_x, \psi^R_y, \psi^R_z)^{\mathrm{T}} \in \mathbb{R}^3\).
\begin{figure}[b!]
\centering
\includegraphics[scale=0.42]{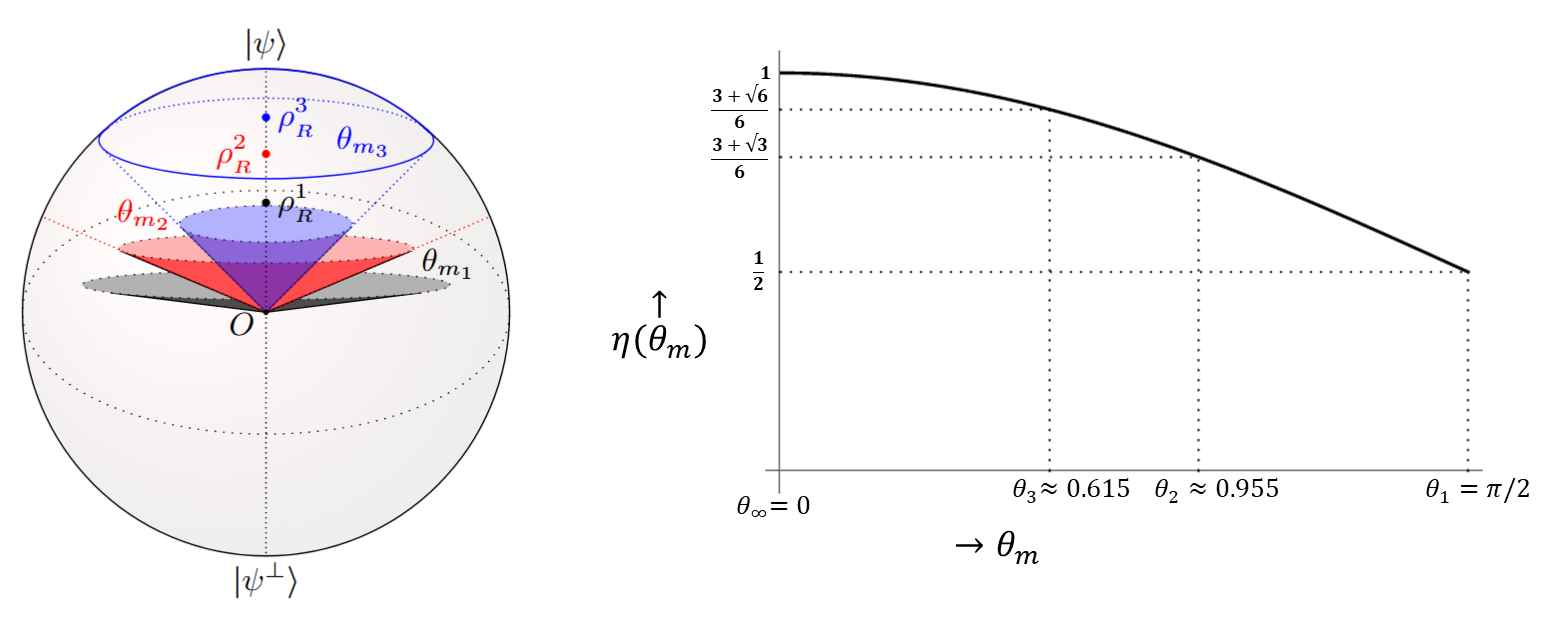}
\caption{(Color online) [Left] Generic Simulation of the Qubit Depolarizing Channel $\mathrm{D}_\eta : \mathcal{D}(\mathbb{C}^2) \to \mathcal{D}(\mathbb{C}^2)$. Given a qubit state $\psi = \frac{1}{2}(\mathbf{I}_2 + \hat{\psi} \cdot \sigma)$ to Alice, the state prepared at Bob’s end is of the form $\mathcal{X} \mathrm{P}_{\hat{\omega}_{i^\star}} \mathcal{X}^\dagger$, where $\mathcal{X}$ is a Haar-random unitary on $\mathbb{C}^2$, and the vector $\hat{\omega}_{i^\star}$ lies within a cone of half apex angle $\theta_m$ centered around $\hat{\psi}$. Averaging over the random unitaries $\mathcal{X}$, the state at Bob's end becomes $\frac{1}{2}(\mathbf{I}_2 + \eta(\theta_m)\, \hat{\psi} \cdot \sigma)$, effectively simulating a depolarizing channel with parameter $\eta(\theta_m)$. As the communication $m$ increases, the apex angle $\theta_m$ decreases, leading to a higher value of $\eta(\theta_m)$, and hence a less noisy depolarizing channel. Here we illustrate three such cases (not to scale), shown respectively in black, red, and blue, corresponding to increasing communication levels $m_1 < m_2 < m_3$. [Right] Solid curve depicts variation of $\eta(\theta_m)$ with $\theta_m$ [see Eq.(\ref{plot})]. Values of $(\theta_m,\eta(\theta_m))$ for 1-bit, 2-bit, and 3-bit communication are shown.}\vspace{-.2cm}
\label{fig1s}
\end{figure}

Denoting the Bloch vector of \(\mathcal{X} \mathrm{P}_{\hat{\omega}_{i^\star}} \mathcal{X}^\dagger\) as \((\sin\theta \cos\varphi, \sin\theta \sin\varphi, \cos\theta)^{\mathrm{T}} \in \mathbb{R}^3\), the components of the resulting density operator are given by:
%\begin{subequations}
\begin{align}
\left.\begin{aligned}
\psi^R_x &= \frac{\int_{0}^{\theta_m} \sin\theta \, d\theta \int_{0}^{2\pi} d\varphi \, \sin\theta \cos\varphi}{\int_{0}^{\theta_m} \sin\theta \, d\theta \int_{0}^{2\pi} d\varphi}, \\
\psi^R_y &= \frac{\int_{0}^{\theta_m} \sin\theta \, d\theta \int_{0}^{2\pi} d\varphi \, \sin\theta \sin\varphi}{\int_{0}^{\theta_m} \sin\theta \, d\theta \int_{0}^{2\pi} d\varphi}, \\
\psi^R_z &= \frac{\int_{0}^{\theta_m} \sin\theta \, d\theta \int_{0}^{2\pi} d\varphi \, \cos\theta}{\int_{0}^{\theta_m} \sin\theta \, d\theta \int_{0}^{2\pi} d\varphi}
\end{aligned}\right\}.
\end{align}
%\end{subequations}
For instance, if Alice is given the state \(\ket{0}\bra{0} = \frac{1}{2}(\mathbf{I}_2 + \sigma_z)\), then the components of the Bloch vector for Bob's resulting state are:
\begin{align}
\psi^R_x = \psi^R_y = 0,~ \&~ \psi^R_z = \frac{1}{4} \cdot \frac{1 - \cos 2\theta_m}{1 - \cos \theta_m} := \eta(\theta_m). \label{plot} 
\end{align}
Thus, the resulting density operator at Bob's end is:
\begin{align}
\rho_R(\ket{0}) = \frac{1}{2}\left(\mathbf{I}_2 + \eta(\theta_m) \, \sigma_z\right).   
\end{align}
The above calculation yields the same result for any arbitrary state \(\psi\) provided to Alice, confirming that the protocol simulates a depolarizing channel with parameter \(\eta(\theta_m)\). Notably, as \(m\) increases, \(\theta_m\) decreases, and \(\eta(\theta_m)\) increases. In the limiting case \(m \to \infty\), we have \(\theta_m \to 0\) and \(\eta(\theta_m) \to 1\), which aligns with the claims of Theorem $1$. Thus, for any value of \(\eta < 1\), a simulation is always achievable with \(m\) bits of finite communication, provided \(m\) is sufficiently large.

\twocolumngrid

%apsrev4-2.bst 2019-01-14 (MD) hand-edited version of apsrev4-1.bst
%Control: key (0)
%Control: author (8) initials jnrlst
%Control: editor formatted (1) identically to author
%Control: production of article title (0) allowed
%Control: page (0) single
%Control: year (1) truncated
%Control: production of eprint (0) enabled
%

%%%%%%%%%%%%%%%%%%%%%%
%\bibliography{Bibliography}
%%%%%%%%%%%%%%%%%%%%%%%%%%%%%

\end{document}